\newtheorem{theo}{Theorem}[section]
\newtheorem{prop}[theo]{Proposition}
\newtheorem{cor}[theo]{Corollary}
\theoremstyle{definition}
\newtheorem{defi}[theo]{Definition}
\newtheorem{exa}[theo]{Example}
\newtheorem{rem}[theo]{Remark}
\numberwithin{equation}{section}
\newcommand{\DS}{\displaystyle}
\newcommand{\N}{{\mathbb N}}
\newcommand{\F}{{\mathbb F}}
\newcommand{\Z}{{\mathbb Z}}
\newcommand{\Q}{{\mathbb Q}}
\newcommand{\C}{{\mathbb C}}
\newcommand{\cC}{{\mathcal C}}
\newcommand{\cH}{{\mathcal H}}
\newcommand{\cM}{{\mathcal M}}
\newcommand{\cP}{{\mathcal P}}
\newcommand{\cO}{{\mathcal O}}
\newcommand{\wcP}{\widehat{\mathcal P}}
\newcommand{\wwcP}{\widehat{\phantom{\big|}\hspace*{.5em}}\hspace*{-.9em}\wcP}
\newcommand{\cPhom}{{\mathcal P}_{\rm{hom}}}
\newcommand{\wcPhom}{\widehat{{\mathcal P}_{\rm{hom}}}}
\newcommand{\wcPhomr}{\widehat{{\mathcal P}_{\rm{hom}}}^{\scriptscriptstyle r}}
\newcommand{\wcPhoml}{\widehat{{\mathcal P}_{\rm{hom}}}^{\scriptscriptstyle l}}
\newcommand{\wcPchil}{\wcP^{^{\scriptscriptstyle[\chi,l]}}}
\newcommand{\wcPchir}{\wcP^{^{\scriptscriptstyle[\chi,r]}}}
\newcommand{\wcHr}{\widehat{\cH'}}
\newcommand{\cQ}{{\mathcal Q}}
\newcommand{\cL}{{\mathcal L}}
\newcommand{\veps}{{\varepsilon}}
\newcommand{\widesim}[1][1.5]{\scalebox{#1}[1]{$\sim$}}
\renewcommand{\mod}{\mbox{\rm mod}\,}
\newcommand{\soc}{\mbox{\rm soc}}
\newcommand{\rad}{\mbox{\rm rad}}
\newcommand{\wt}{{\rm wt}}
\newcommand{\mmid}{\mbox{$\,|\,$}}
\newcommand{\mmidbig}{\mbox{$\,\big|\,$}}
\newcommand{\inner}[1]{\mbox{$\langle{#1}\rangle$}}
\newcounter{alp}
\newcounter{ara}
\newcounter{rom}
\newenvironment{romanlist}{\begin{list}{(\roman{rom})\hfill}{\usecounter{rom}
     \topsep0ex \labelwidth.7cm \leftmargin.7cm \labelsep0cm
     \rightmargin0cm \parsep0ex \itemsep.4ex
     \partopsep1ex}}{\end{list}}
\newenvironment{alphalist}{\begin{list}{(\alph{alp})\hfill}{\usecounter{alp}
     \topsep.5ex \labelwidth.6cm \leftmargin.6cm \labelsep0cm
     \rightmargin0cm \parsep0ex \itemsep0ex
     \partopsep1.6ex}}{\end{list}}
\title{Partitions of Frobenius Rings Induced by the Homogeneous Weight}
\date\today
\author{Heide Gluesing-Luerssen\thanks{The author was partially supported by the National Science Foundation
        grants \#DMS-0908379 and \#DMS-1210061.}\\
        University of Kentucky\\ Department of Mathematics\\
       715 Patterson Office Tower\\ Lexington, KY 40506-0027, USA\\ heide.gl@uky.edu}
\begin{document}
\maketitle
\noindent{\bf Abstract:}
The values of the homogeneous weight are determined for finite Frobenius rings that are a direct product of local Frobenius rings.
This is used to investigate the partition induced by this weight and its dual partition under character-theoretic dualization.
A characterization is given of those rings for which the induced partition is reflexive or even self-dual.

\medskip
\noindent{\bf Keywords:} Homogeneous weight, finite Frobenius rings, reflexive partitions

\smallskip
\noindent{\bf MSC (2000):} 94B05, 94B99, 16L60

\section{Introduction}\label{SS-Intro}
The homogeneous weight has been studied extensively in the literature of codes over rings.
It has been introduced by Constantinescu and Heise~\cite{CoHe97} as a generalization of both the Hamming weight
on finite fields and the Lee weight on~$\Z_4$.
Its main feature is that the average weight of the elements in a nonzero principal ideal is the same constant for all such ideals.
The weight has been further generalized to arbitrary non-commutative finite rings by Greferath and Schmidt~\cite{GrSch00} as well as
Honold and Nechaev~\cite{HoNe99}.
The homogeneous weight has proven to be an important tool in ring-linear coding.
For instance, in~\cite{DGLS01} Duursma et al.\ construct non-linear codes with the best parameters so far using certain
ring-linear codes and where the ring is endowed with the homogenous weight.

These and other properties of the homogeneous weight have led to a detailed study of this weight.
Among other things, it has been shown that the MacWilliams extension theorem remains true for isomorphisms preserving the homogeneous weight,
see Constantinescu et al.~\cite{CHH96} for codes over the integer residue ring~$\Z_N$, Wood~\cite{Wo97} and Greferath and
Schmidt~\cite{GrSch00} for codes over general finite Frobenius rings, and Greferath et al.~\cite{GNW04} for codes over
the Frobenius module of a finite ring.

On the other hand, so far no explicit MacWilliams identity for the homogeneous-weight enumerators of codes has
been established in any general form.
Such identities relate a suitably defined weight enumerator of a code to the weight enumerator (or a dual version thereof) of its dual code.
MacWilliams identities are well known for many weight functions, e.g., the Hamming weight, the complete weight, the symmetrized Lee weight, and more.

The non-existence of an explicit MacWilliams identity for the homogeneous weight is due to the fact that the partition induced by this
weight does not behave as well under dualization as those for the Hamming weight or the other weights just mentioned.
More precisely, the induced partition is in general not self-dual with respect to a certain character-theoretic dualization.
In this paper we will study the homogeneous weight partition on a certain class of finite Frobenius rings, which includes all commutative Frobenius rings,
and will provide a
characterization of those rings, for which the partition is reflexive (that is, coincides with its bidual) or even self-dual.
Reflexivity, which is weaker than self-duality, guarantees a MacWilliams identity because in this case the partition and its dual have
the same number of partition sets.
In this case we will also provide the associated Krawtchouk coefficients.
With these data, an explicit MacWilliams identity relating the corresponding partition enumerators is simply an instance of
the general theory about reflexive partitions, see for instance~\cite{GL13pos} or Camion~\cite{Cam98}.
A precursor of these ideas is the paper~\cite{BGO07} by Greferath et al.

We will study the homogeneous weight on Frobenius rings that are a direct product of local Frobenius rings.
For the latter the homogeneous weight is well-known~\cite{BGO07} and takes a very simple form.
Making use of an explicit formula for the values of the homogeneous weight provided by Honold~\cite{Hon01},
we will be able to compute the values of the homogeneous weight on the specified Frobenius rings.
This is carried out in Section~\ref{SS-homogWt}.

In Section~\ref{SS-HomogPart} we then go on and study the partition induced by the homogeneous weight.
We will see that all elements outside the socle have the same weight, thus form one partition set, whereas in the socle the homogeneous partition
is closely related to the product of Hamming partitions that are induced by a suitable direct product representation of the ring.
In fact, we will show that the homogeneous partition is reflexive if and only if its restriction to the socle coincides with the just described
product of the Hamming partitions.
We also give a characterization of the rings for which the homogeneous partition is reflexive.
It is given in terms of the orders of the residue fields of the local component rings.
Finally, we will prove that the homogeneous weight partition is self-dual if and only if it is reflexive and
the ring is semisimple.

\section{Frobenius Rings and Partitions}\label{SS-FrobPart}
Throughout this section, let~$G=(G,+)$ be a finite abelian group, and let~$R$ be a finite ring with unity.
Denote its group of units by~$R^*$.
Our main subject is partitions and their duals of~$R$ or~$R^n$, but occasionally we need to consider partitions of the group;
mainly for the situation where~$G$ is the additive group of an ideal of~$R$.
For this reason we present the main notions of this section for groups with special emphasis on rings.

Denote by $\widehat{G}$ the complex character group of~$G$.
Thus $\widehat{G}=\text{Hom}\big(G,\C^*\big)$ is the set of all group homomorphisms from~$G$ into~$\C^*$
with addition $(\chi_1+\chi_2)(a):=\chi_1(a)\chi_2(a)$.
The zero element is the \emph{principal character} $\veps\in\widehat{G}$, given by
$\veps(a)=1$ for all  $a\in G$.
It is well known that~$G$ and $\widehat{G}$ are isomorphic.
The most fundamental property of characters on~$G$ is the orthogonality relation
\begin{equation}\label{e-RChar}
   \sum_{a\in G}\chi(a)=\left\{\begin{array}{cl}
          0,&\text{if }\chi\neq\veps,\\ |G|,&\text{if }\chi=\veps.\end{array}\right.
\end{equation}

For a ring~$R$ and $n\in\N$ we denote by~$\widehat{R^n}$ the character group of the additive group $(R^n,+)$.
This group can be endowed with an $R$-$R$-bimodule structure via the left and right scalar multiplications
\begin{equation}\label{e-bimodule}
    (r\!\cdot\!\chi)(v)=\chi(vr)\ \text{ and }\
    (\chi\!\cdot\! r)(v)=\chi(rv)\text{ for all } r\in R\text{ and }v\in R^n.
\end{equation}
Recall that~$R$ is a Frobenius ring if $_R\soc(_R R)\cong\, _R(R/\rad(R))$,
where $\soc(_R R)$ denotes the socle of the left~$R$-module~$R$
and $\rad(R)$ is the Jacobson radical of~$R$
(it is well-known that the existence of a left isomorphism implies the right analogue).
Since $\rad(R)$ is a two-sided ideal, $R/\rad(R)$ is even a ring.
Furthermore, $\soc(_R R)=\soc(R_R)$, and we will simply write $\soc(R)$ for the socle.

We summarize the following properties about Frobenius rings, some of which actually characterize the
Frobenius property, but we will not engage in that discussion.
Details can be found in many books on ring theory, e.g.,~\cite[Ch.~6]{Lam99} by Lam
or in  the research articles by Lamprecht~\cite{Lamp53}, Hirano~\cite{Hi97}, Wood~\cite[Thm.~3.10]{Wo99},  and
Honold~\cite{Hon01}.

\begin{rem}\label{R-FrobProp}
Let~$R$ be a finite Frobenius ring. Then the following are true.
\begin{alphalist}
\item $R$ and~$\hat{R}$ are isomorphic left $R$-modules and isomorphic right $R$-modules.
	More precisely, there exists a character~$\chi\in\hat{R}$  such that
	\begin{equation}\label{e-RRhat}
	   R\longrightarrow \hat{R},\quad r\longmapsto r\!\cdot\!\chi,
	   \ \text{ resp.}\
	   R\longrightarrow \hat{R},\quad r\longmapsto \chi\!\cdot\! r
	\end{equation}
	is an isomorphism of left (resp.\ right) $R$-modules.
	Any such~$\chi$ is called a \emph{generating character} of~$R$.
	Obviously, any two generating characters~$\chi,\,\chi'$ differ by a unit, i.e., $\chi'=u\!\cdot\!\chi$ and
	 $\chi'=\chi\!\cdot\!u'$ for some $u,\,u'\in R^*$.
	More generally, for each~$n$, the maps
	\begin{equation}\label{e-RRnhat}
	     \alpha_l:\; R^n\longrightarrow \widehat{R^n},\quad v\longmapsto \chi(\inner{-,v}),
	     \ \text{ and }\
	     \alpha_r:\; R^n\longrightarrow \widehat{R^n},\quad v\longmapsto \chi(\inner{v,-}),
	\end{equation}
	are left (resp.\ right) $R$-module isomorphisms (here $\inner{v,w}$ denotes the standard inner product on~$R^n$).
\item Let~$\chi$ be a character of~$R$.
      Then~$\chi$ is a generating character of~$R$ if and only if the only left (resp.\ right) ideal contained in
      $\ker\chi:=\{a\in R\mid \chi(a)=1\}$ is the zero ideal; see~\cite[Cor.~3.6]{ClGo92}.
\item $R$ satisfies the double annihilator property, i.e.,
      $\text{ann}_l(\text{ann}_r(I))=I$ for each left ideal~$I$ of~$R$ and
      $\text{ann}_r(\text{ann}_l(I))=I$ for each right ideal~$I$,
      where $\text{ann}_l$ (resp.\ $\text{ann}_r$) denotes the left (resp.\ right) annihilator ideal;
      see \cite[Thm.~15.1]{Lam99}.
      In particular,  $\text{ann}_l(\rad(R))=\soc(R)=\text{ann}_r(\rad(R))$, see \cite[Cor.~15.7]{Lam99}.
\item $\soc(R)$ is a principal ideal~\cite[p.~21]{GrSch00}.
\item If~$R$ is a commutative finite Frobenius ring, then
       $R=R_1\times\ldots\times R_t$ for suitable local Frobenius rings~$R_i$; see~\cite[Th.~15.27]{Lam99}.
\item If~$R$ is local, that is, $\rad(R)$ is the unique maximal left (resp.\ right) ideal of~$R$, then~$\soc(R)$ is the unique minimal
      ideal~\cite[Ex.~(3.14)]{Lam99}.
      In this case $R/\rad(R)$ is called the \emph{residue field of} $R$.
\end{alphalist}
\end{rem}

Many standard examples of rings are Frobenius.
For details we refer to Wood~\cite[Ex.~4.4]{Wo99} and Lam~\cite[Sec.~16.B]{Lam99} and many other sources.
\begin{exa}\label{E-Frob}
\begin{alphalist}
\item The integer residue rings $\Z_N$, where $N\in\N$, are Frobenius.
      Each generating character is of the form $\chi(g):=\zeta^{g}$ for all $g\in\Z_N$, where
      $\zeta\in\C$ is an $N$-th primitive root of unity.
      Thus, each character is given by $(a\!\cdot\!\chi)(g)=\chi(ag)=\zeta^{ag}$ for some $a\in\Z_N$.
\item Every finite field is Frobenius, and every non-principal character is generating.
\item Finite chain rings (e.g., Galois rings),  finite group rings over a Frobenius ring and direct products of Frobenius rings
      are Frobenius.
\item Matrix rings over Frobenius rings are Frobenius.
\item The ring $R=\F_2[x,y]/(x^2,y^2,xy)$ is a local, non-Frobenius ring; see~\cite[Ex.~3.2]{ClGo92}.
\end{alphalist}
\end{exa}

\medskip
For a subgroup $H\leq G$ (called an \emph{additive code}) we define the \emph{dual subgroup} as
\begin{equation}\label{e-dualsubgroup}
   H^{\perp}:=\{\chi\in\widehat{G}\mid \chi(h)=1\text{ for all }h\in H\}.
\end{equation}
It is straightforward to see that
\begin{equation}\label{e-Hhat}
   \widehat{H}\cong \widehat{G}/H^{\perp}.
\end{equation}
As usual, a \emph{code over~$R$} is defined to be a left submodule of~$R^n$ for some~$n\in\N$.
Following the above, the \emph{dual code} is
$\cC^{\perp}=\{\chi\in\widehat{R^n}\mid \chi(v)=1\text{ for all }v\in\cC\}$.
It is easily seen that for any code~$\cC$ the dual $\cC^{\perp}$ is a right $R$-submodule of~$\widehat{R}$.
The double annihilator property from Remark~\ref{R-FrobProp}(c) extends to $\cC^{\perp\perp}=\cC$ and
$|\cC^{\perp}||\cC|=|R^n|$, see \cite[Cor.~5]{HoLa01} or  \cite[Rem.~5.5]{GL13pos}.
Applying the isomorphisms in~\eqref{e-RRnhat} to the additive group of~$\cC^{\perp}$ results in the right and left hand side dual
(see also \cite[Thm.~7.7]{Wo99})
\begin{equation}\label{e-Cdual}
\begin{split}
  \cC^{\perp,l}&:=\alpha_r^{-1}(\cC^{\perp})=\{v\in R^n\mid \inner{v,w}=0\text{ for all }w\in\cC\},\\
  \cC^{\perp,r}&:=\alpha_l^{-1}(\cC^{\perp})=\{v\in R^n\mid \inner{w,v}=0\text{ for all }w\in\cC\}.
\end{split}
\end{equation}
Note that $\cC^{\perp,l}$ is an $R$-$R$-bimodule, whereas $\cC^{\perp,r}$ is in general just a right $R$-module.

\medskip
We now turn to partitions on~$G$ and fix the following notation.
A partition $\cP=(P_m)_{m=1}^M$ of a set~$X$ will mostly be
written as $\cP=P_1\mmid P_2\mmid\ldots\mmid P_M$.
The sets of the partition are called its \emph{blocks}, and we write $|\cP|$ for the number of blocks in~$\cP$.
Recall that two partitions~$\cP$ and~$\cQ$ are called \emph{identical} if $|\cP|=|\cQ|$ and the blocks coincide
after suitable indexing.
Moreover,~$\cP$ is called \emph{finer} than~$\cQ$ (or~$\cQ$ is \emph{coarser} than~$\cP$), written as $\cP\leq\cQ$, if
every block of~$\cP$ is contained in a block of~$\cQ$.
Note that if~$\cP\leq\cQ$ then $|\cP|\geq|\cQ|$.
Denote by~$\widesim_{\cP}$ the equivalence relation induced by~$\cP$, thus,
$v\widesim_{\cP}v'$ if $v,\,v'$ are in the same block of~$\cP$.

The following notion of a dual partition will be crucial for us.
The left-sided version has been introduced for Frobenius rings by Byrne et~al.~\cite[p.~291]{BGO07} and goes back to the notion of
F-partitions as introduced by Zinoviev and Ericson in~\cite{ZiEr96}.
Reflexive partitions, defined below, appear already in~\cite{ZiEr09} by Zinoviev and Ericson  were they have been
coined B-partitions.
They are exactly the partitions that induce abelian association schemes as
studied in a more general context by Delsarte~\cite{Del73}, Camion~\cite{Cam98}, and others, see also~\cite{DeLe98}.
For an overview of these various approaches and their relations in the language of partitions, see also~\cite{GL13pos}.

Throughout, we will use the notation $[n]:=\{1,\ldots,n\}$ and $[n]_0:=\{0,\ldots,n\}$.

\begin{defi}\label{D-DualPart}
\begin{alphalist}
\item Let $\cP=P_1\mmid P_2\mmid\ldots\mmid P_M$ be a partition of the group~$G$.
The \emph{dual partition}, denoted by~$\wcP$, is the partition of~$\widehat{G}$ defined via the equivalence relation
\begin{equation}\label{e-simPhat}
  \chi\widesim_{\wcP} \chi' :\Longleftrightarrow \sum_{g\in P_m}\chi(g)=\sum_{g\in P_m}\chi'(g)
  \text{ for all }m\in[M].
\end{equation}
For the pair $(\cP,\wcP)$, where $\wcP=Q_1\mmid\ldots\mmid Q_L$,
the \emph{generalized Krawtchouk coefficients}~$K_{\ell,m},\,\ell\in[L],\,m\in[M]$, are defined as
$K_{\ell,m}=\sum_{g\in P_m}\chi(g)$, where $\chi$ is any element in $Q_\ell$.
The partition~$\cP$ is called \emph{reflexive} if $\wwcP=\cP$.
\item Let~$R$ be a Frobenius ring with generating character~$\chi$, and let
$\cP=P_1\mmid P_2\mmid\ldots\mmid P_M$ be a partition of the group~$(R^n,+)$.
The \emph{left and right $\chi$-dual partition} of~$\cP$, denoted
by~$\wcP^{^{\scriptscriptstyle[\chi,l]}}$ and ~$\wcP^{^{\scriptscriptstyle[\chi,r]}}$, are defined as the preimage of~$\wcP$
under the isomorphisms~$\alpha_l$ and~$\alpha_r$ in~\eqref{e-RRnhat}.
Thus, $\wcP^{^{\scriptscriptstyle[\chi,l]}}$ and $\wcP^{^{\scriptscriptstyle[\chi,r]}}$ are the partitions of~$R^n$ given by
the equivalence relations
\begin{equation}\label{e-simPhat2l}
  v\widesim_{\wcP^{^{\scriptscriptstyle[\chi,l]}}}\, v' :\Longleftrightarrow
  \sum_{w\in P_m}\chi(\inner{w,v})=\sum_{w\in P_m}\chi(\inner{w,v'}) \text{ for all }m=1,\ldots,M
\end{equation}
and
\begin{equation}\label{e-simPhat2r}
  v\widesim_{\wcP^{^{\scriptscriptstyle[\chi,r]}}}\, v' :\Longleftrightarrow
  \sum_{w\in P_m}\chi(\inner{v,w})=\sum_{w\in P_m}\chi(\inner{v',w}) \text{ for all }m=1,\ldots,M,
\end{equation}
respectively.
\end{alphalist}
\end{defi}

It is not hard to find examples of rings and partitions for which the left and right 
$\chi$-dual partitions do not coincide for any generating character~$\chi$ (take, e.g., the ring in \cite[Ex.~1.4(iii)]{Wo99}).
We  have the following relation.
\begin{prop}[\mbox{\cite[Prop.~4.4]{BGL13}}]\label{P-leftrightbidual}
Let~$\cP$ be a partition on~$R^n$.
Then
\begin{equation}\label{e-bidual}
 \widehat{\phantom{\Big|}\hspace*{1.4em}}^{\,\scriptscriptstyle[\chi,r]}\hspace*{-3.5em}\widehat{\cP}^{^{\scriptscriptstyle[\chi,l]}}
 \hspace*{1.2em}
 =\wwcP
 =\;\widehat{\phantom{\Big|}\hspace*{1.4em}}^{\,\scriptscriptstyle[\chi,l]}\hspace*{-3.3em}\widehat{\cP}^{^{\scriptscriptstyle[\chi,r]}}\hspace*{1.2em},
\end{equation}
where~$\wwcP$ is the bidual partition in the group sense of~\eqref{e-simPhat}.
Consequently,~$\cP$ is reflexive if and only if
$\cP=\;\widehat{\phantom{\Big|}\hspace*{1.4em}}^{\,\scriptscriptstyle[\chi,r]}\hspace*{-3.5em}\widehat{\cP}^{^{\scriptscriptstyle[\chi,l]}}\hspace*{1.2em}$, which is equivalent to
$\cP=\;\widehat{\phantom{\Big|}\hspace*{1.4em}}^{\,\scriptscriptstyle[\chi,l]}\hspace*{-3.5em}\widehat{\cP}^{^{\scriptscriptstyle[\chi,r]}}\hspace*{1.2em}$.
Moreover, $\cP=\wcPchil$ if and only if $\cP=\wcPchir$.
We call~$\cP$ $\chi$-self-dual if $\cP=\wcPchir$.
\end{prop}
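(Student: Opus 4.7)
The plan is to reduce everything to a single clean identity between the three group isomorphisms $\alpha_l,\alpha_r\colon R^n\to\widehat{R^n}$ and the canonical evaluation isomorphism $\eta\colon R^n\to\widehat{\widehat{R^n}}$, $v\mapsto(\chi\mapsto\chi(v))$. From the definitions in~\eqref{e-RRnhat} one reads off
\begin{equation*}
  \alpha_l(w)(v)=\chi(\inner{v,w})=\alpha_r(v)(w)\quad\text{for all }v,w\in R^n,
\end{equation*}
and interpreting the outer equality as $\eta(v)(\alpha_l(w))=\alpha_r(v)(w)$, using that the pullback homomorphism $\widehat{\alpha_l}\colon\widehat{\widehat{R^n}}\to\widehat{R^n}$ sends $\psi\mapsto\psi\circ\alpha_l$, this rewrites as
\begin{equation*}
  \widehat{\alpha_l}\circ\eta=\alpha_r,\qquad\text{and symmetrically,}\qquad\widehat{\alpha_r}\circ\eta=\alpha_l.
\end{equation*}

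Combined with the routine change-of-variables identity $\widehat{f^{-1}(\cQ)}=\widehat{f}(\widehat{\cQ})$, valid for any group isomorphism $f$ and partition $\cQ$ of its target and obtained directly by substituting $h=f(g)$ in~\eqref{e-simPhat}, applying the group-dual to $\wcPchil=\alpha_l^{-1}(\wcP)$ yields $\widehat{\wcPchil}=\widehat{\alpha_l}(\wwcP)$, whence
\begin{equation*}
  \alpha_r^{-1}\bigl(\widehat{\wcPchil}\bigr)=\alpha_r^{-1}\circ\widehat{\alpha_l}(\wwcP)=\eta^{-1}(\wwcP),
\end{equation*}
which is exactly $\wwcP$ read as a partition on $R^n$ via $\eta$. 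The left-hand side is by definition the right $\chi$-dual of $\wcPchil$, so the first equality in~\eqref{e-bidual} is established; the second follows by the symmetric chain. The stated reflexivity equivalence is then an immediate substitution of $\cP=\wwcP$ into~\eqref{e-bidual}.

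For the final equivalence $\cP=\wcPchil\Leftrightarrow\cP=\wcPchir$ I would invoke the standard fact that the dual of any group partition is itself reflexive. (This follows from the always-valid refinement $\wwcP\leq\cP$: applied with $\wcP$ in place of $\cP$ it gives $\widehat{\widehat{\wcP}}\leq\wcP$, while order-reversal of dualization applied to $\wwcP\leq\cP$ yields $\widehat{\wwcP}\geq\wcP$, and the two force equality.) If $\cP=\wcPchil$, i.e., $\alpha_l(\cP)=\wcP$, then $\alpha_l(\cP)$ is reflexive on $\widehat{R^n}$, and since reflexivity is transferred by any group isomorphism, $\cP$ itself is reflexive, $\cP=\wwcP$. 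Combined with $\wwcP=\wcPchir$, obtained by substituting the hypothesis $\cP=\wcPchil$ into~\eqref{e-bidual}, this gives $\cP=\wcPchir$; the converse is symmetric.

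The main obstacle is purely bookkeeping: keeping straight which maps act on $R^n$, $\widehat{R^n}$, or $\widehat{\widehat{R^n}}$, and consistently identifying the group bidual with a partition on $R^n$ via $\eta$. Once the identity $\widehat{\alpha_l}\circ\eta=\alpha_r$ is noticed, everything else is formal manipulation.
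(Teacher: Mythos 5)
The paper does not prove this proposition at all --- it is quoted verbatim from \cite[Prop.~4.4]{BGL13} --- so your argument can only be judged on its own terms. The core of it is fine: the identities $\widehat{\alpha_l}\circ\eta=\alpha_r$ and $\widehat{\alpha_r}\circ\eta=\alpha_l$ are correct consequences of $\alpha_l(w)(v)=\chi(\inner{v,w})=\alpha_r(v)(w)$, the change-of-variables rule $\widehat{f^{-1}(\cQ)}=\widehat{f}(\widehat{\cQ})$ is valid, and together they do reduce the right dual of $\wcPchil$ (and the left dual of $\wcPchir$) to $\eta^{-1}(\wwcP)$, which settles \eqref{e-bidual} and the reflexivity equivalence.

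The gap is in your last paragraph. Your parenthetical justification of ``the dual of any partition is reflexive'' invokes ``order-reversal of dualization,'' but dualization \emph{preserves} the refinement order: the paper's own Remark~\ref{R-Dual0}(b) records that $\cP\le\cQ$ implies $\wcPchil\le\widehat{\cQ}^{^{\scriptscriptstyle[\chi,l]}}$, and the group version holds for the same reason (each block of the coarser partition is a union of blocks of the finer one). Applying order-preserving dualization to $\wwcP\le\cP$ therefore yields $\widehat{\wwcP}\le\wcP$, which is literally the same inequality as your $\widehat{\widehat{\wcP}}\le\wcP$ --- both compare the triple dual with the single dual --- so you have derived one inequality twice and nothing forces equality. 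The lemma you want is true but requires a genuine argument, and in any case you do not need it: if $\cP=\wcPchil=\alpha_l^{-1}(\wcP)$, then $\alpha_l$ maps the blocks of $\cP$ bijectively onto those of $\wcP$, so $|\cP|=|\wcP|$, and Theorem~\ref{T-ReflCrit} gives reflexivity of $\cP$ directly; taking the right dual of the hypothesis and using \eqref{e-bidual} then gives $\wcPchir=\wwcP=\cP$, with the converse symmetric. With that one repair the proof is complete.
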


Let us briefly discuss the Krawtchouk coefficients $K_{\ell,m}$ of the pair $(\cP,\wcP)$ from Definition~\ref{D-DualPart}(a).
Due to the very definition of the dual partition, these coefficients do not depend on the choice of~$\chi$ in~$Q_{\ell}$.
In the ring setting the coefficients read as follows.
Let $\wcP^{^{\scriptscriptstyle[\chi,l]}}=Q'_1\mmid\ldots\mmid Q'_L$ and
$\wcP^{^{\scriptscriptstyle[\chi,r]}}=Q''_1\mmid\ldots\mmid Q''_L$.
Hence $Q'_m=\alpha_l^{-1}(Q_m)$ and $Q''_m=\alpha_r^{-1}(Q_m)$ for $m=1,\ldots,L$.
Then~\eqref{e-RRnhat} yields
\begin{eqnarray}
    K_{\ell,m}&=\sum_{w\in P_m}\chi(\inner{w,v}),\text{ where~$v$ is any element in }Q'_\ell, \label{e-klm2l}\\
                   &=\sum_{w\in P_m}\chi(\inner{v,w}),\text{ where~$v$ is any element in }Q''_\ell \label{e-klm2r}.
\end{eqnarray}
In particular,~$K_{\ell,m}$ does not depend on the choice of~$v$ inside~$Q'_\ell$ (resp.~$Q''_\ell$), and neither does
it  depend on the sidedness of the dual partition.

\medskip
The following example illustrates that the dual partition does in general depend on the choice of the generating character~$\chi$, even in the commutative case.
\begin{exa}\label{E-FPartNotIndep}
Consider the field~$\F_4=\{0,\,1,\,a,\,a^2\}$, where $a^2=a+1$.
The maps~$\chi,\,\tilde{\chi}$ given by
\[
  \chi(0)=\chi(1)=1,\,\chi(a)=\chi(a^2)=-1 \;\text{ and }\;
  \tilde{\chi}(0)=\tilde{\chi}(a)=1,\,\tilde{\chi}(1)=\tilde{\chi}(a^2)=-1,
\]
are characters of~$\F_4$.
As in~\eqref{e-RRhat} they give rise to the isomorphisms $\alpha:\F_4\rightarrow\widehat{\F_4},\, r\mapsto r\!\cdot\!\chi$ and
$\beta:\F_4\rightarrow\widehat{\F_4},\, r\mapsto r\!\cdot\!\tilde{\chi}$.
Consider the partition~$\cP$ of~$\F_4$ given by $P_0\mmid P_1\mmid P_2=0\mmid1\mmid a,a^2$.
Then $\sum_{b\in P_j}\chi(ab)=\sum_{b\in P_j}\chi(a^2b)=(-1)^j$ for $j=0,1$, while
$\sum_{b\in P_2}\chi(ab)=\chi(a^2)+\chi(1)=0$ and $\sum_{b\in P_2}\chi(a^2b)=\chi(1)+\chi(a)=0$.
Writing simply~$\wcP^{^{\scriptscriptstyle[\chi]}}$ for $\wcP^{^{\scriptscriptstyle[\chi,r]}}=\wcP^{^{\scriptscriptstyle[\chi,l]}}$, we conclude
$\wcP^{^{\scriptscriptstyle[\chi]}}=\cP$ and~$\cP$ is $\chi$-self-dual, hence also reflexive.
In the same way one computes $\wcP^{^{\scriptscriptstyle[\tilde{\chi}]}}$ and obtains
$\cQ:=\wcP^{^{\scriptscriptstyle[\tilde{\chi}]}}=0\mmid1,a^2\mmid a$.
Hence $\cP$ is not $\tilde{\chi}$-self-dual.
One easily verifies $\widehat{\cQ}^{^{\scriptscriptstyle[\tilde{\chi}]}}=\cP$, which also follows from the fact
that reflexivity does not depend on the choice of the generating character.
\end{exa}

\begin{rem}\label{R-ZNcharac}
For the integer residue rings $R=\Z_N$, the dual of a partition does not depend on the choice of
the generating character.
This follows immediately from Example~\ref{E-Frob}(a) along with the fact that all primitive $N$-th roots of unity
have the same minimal polynomial.
The latter implies  that the identities on the right hand side of~\eqref{e-simPhat2l} do not depend on the choice of the primitive
$N$-th root~$\zeta$.
\end{rem}

For many standard partitions on Frobenius rings, e.g., the Hamming partition, the dual does not depend on the choice of the
generating character.
In the next section we will see that this is also the case for the main topic of this paper, partitions induced by the homogeneous weight.

Let us return to the general situation of partitions of groups.
We have the following simple observation.

\begin{rem}\label{R-Dual0}
\begin{alphalist}
\item The singleton~$\{\veps\}$ is always a block of~$\wcP$.
      Indeed, $\sum_{g\in P_m}\veps(g)=|P_m|$ for every block~$P_m$.
      Hence if~$\chi\in\widehat{G}$ satisfies $\sum_{g\in P_m}\chi(g)=|P_m|$ for all $P_m$, then
      $\sum_{g\in G}\chi(g)=|G|$ and thus $\chi=\veps$ by~\eqref{e-RChar}.
      Using~\eqref{e-RRnhat} we also conclude that $\{0\}$ is a block of $\wcP^{^{\scriptscriptstyle[\chi,l]}}$ and
      $\wcP^{^{\scriptscriptstyle[\chi,r]}}$.
\item If $\cP\leq\cQ$, then $\wcPchil\leq\widehat{\cQ}^{^{\scriptscriptstyle[\chi,l]}}$
      and $\wcPchir\leq\widehat{\cQ}^{^{\scriptscriptstyle[\chi,r]}}$.
      This follows directly from the definition of the dual partitions since each block of~$\cQ$ is the union of blocks of~$\cP$.
\end{alphalist}
\end{rem}

As has been shown in various forms in the literature~\cite{Del73,ZiEr09,BGO07,GL13pos}, a partition~$\cP$
of~$R^n$ and its dual partition $\wcP$ of $\widehat{R^n}$ allow a MacWilliams identity:
applying a certain MacWilliams transformation to the $\wcP$-partition enumerator
of a code $\cC\subseteq\widehat{R^n}$ results in the $\cP$-enumerator of its dual~$\cC^\perp\subseteq R^n$.
For an overview in the language of this paper see~\cite[Sec.~2]{GL13pos}.
The most symmetric situation arises for reflexive (or even self-dual) partitions, in which case the transformation can be carried
out in both directions, and thus the two enumerators determine each other uniquely.
Most, if not all, classical examples of MacWilliams identities are instances of this general MacWilliams identity based on a self-dual
partition (for instance, for the Hamming weight, symmetrized Lee weight, complete weight, and the rank metric).

In the next section we will study the partition on a Frobenius ring induced by the homogeneous weight and investigate for which rings
the partition is reflexive or even self-dual.
Our main tool for characterizing reflexivity of a partition is the following convenient criterion from~\cite[Thm.~3.1]{GL13pos}.

\begin{theo}\label{T-ReflCrit}
For any partition~$\cP$ on~$G$ and its dual partition~$\wcP$
we have $|\cP|\leq|\wcP|$ and
$\widehat{\phantom{\big|}\hspace*{.6em}}\hspace*{-.9em}\wcP\leq\cP$.
Moreover, $\cP$ is reflexive if and only if $|\cP|=|\wcP|$.
\end{theo}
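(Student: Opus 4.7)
The plan is to work in the group algebra $V=\mathbb C^G$, viewed simultaneously through the delta basis $\{\delta_g\}_{g\in G}$ and the character basis $\{\chi\}_{\chi\in\widehat G}$, which are related by the discrete Fourier transform on~$G$. Let $\cP=P_1\mmid\cdots\mmid P_M$ and $\wcP=Q_1\mmid\cdots\mmid Q_L$. For $m\in[M]$ and $\ell\in[L]$ introduce the two families of functions
\[
  \phi_m:\widehat G\to\C,\ \chi\mapsto\sum_{g\in P_m}\chi(g),\qquad
  \psi_\ell:G\to\C,\ g\mapsto\sum_{\chi\in Q_\ell}\chi(g),
\]
so that the $\phi_m$ are the data determining $\wcP$ by~\eqref{e-simPhat}, and the $\psi_\ell$ are constant on the blocks of $\wwcP$, essentially by the same definition applied to $\wcP$. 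The Krawtchouk values $K_{\ell,m}=\phi_m(\chi)$ for any $\chi\in Q_\ell$ are then the coefficients that link these two viewpoints.

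The first key step is to expand the indicator function $\mathbf 1_{P_m}$ in the character basis. Standard Fourier inversion gives $\mathbf 1_{P_m}=|G|^{-1}\sum_\chi\phi_m(\chi^{-1})\,\chi$. Before I can group this sum over blocks of $\wcP$, I need the small lemma that $\wcP$ is stable under $\chi\mapsto\chi^{-1}$; this follows immediately by complex-conjugating the defining relation in~\eqref{e-simPhat}, since $\chi^{-1}(g)=\overline{\chi(g)}$. Hence the coefficient $\phi_m(\chi^{-1})$ depends only on the $\wcP$-block of $\chi$, and I may regroup to obtain an identity of the form $\mathbf 1_{P_m}=|G|^{-1}\sum_\ell K_{\sigma(\ell),m}\,\psi_\ell$ for a suitable permutation $\sigma$ of $[L]$. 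In particular, each $\mathbf 1_{P_m}$ lies in the span $U:=\spann\{\psi_\ell:\ell\in[L]\}$.

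From here the two displayed inequalities drop out. For $\wwcP\leq\cP$: each $\psi_\ell$ is constant on blocks of $\wwcP$, so any element of $U$ is; therefore $\mathbf 1_{P_m}$ is constant on $\wwcP$-blocks, which forces each $\wwcP$-block to be contained in a single $P_m$. For $|\cP|\leq|\wcP|$: the $\mathbf 1_{P_m}$ are linearly independent ($M$ vectors), the $\psi_\ell$ are linearly independent ($L$ vectors, since they are disjoint character sums and characters form a basis of~$V$), and the former sit in the span of the latter, so $M\leq L$.

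The equivalence in the "moreover" part is then a bookkeeping argument using what has just been established applied both to $\cP$ and to $\wcP$. If $\cP$ is reflexive, then $|\cP|=|\wwcP|\geq|\wcP|\geq|\cP|$, so $|\cP|=|\wcP|$. Conversely, if $|\cP|=|\wcP|$, the inequality $|\wcP|\leq|\wwcP|$ (applied to $\wcP$) gives $|\wwcP|\geq|\cP|$, while the inclusion $\wwcP\leq\cP$ gives $|\wwcP|\geq|\cP|$ as well as the structural statement that every block of $\wwcP$ lies inside a block of $\cP$; equality of cardinalities then upgrades this inclusion of partitions to equality, so $\wwcP=\cP$. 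I expect the only technical point needing care to be the invariance of $\wcP$ under inversion of characters, which justifies regrouping the Fourier expansion over $\wcP$-blocks; everything else is linear algebra once that is in place.
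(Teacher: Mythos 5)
The paper does not actually prove this theorem; it imports it from \cite[Thm.~3.1]{GL13pos}, so there is no internal proof to compare against. Your Fourier-analytic setup is the standard route, and the first two assertions are established correctly: the inversion formula $\mathbf{1}_{P_m}=|G|^{-1}\sum_{\chi}\phi_m(\chi^{-1})\chi$, the observation that $\wcP$ is stable under $\chi\mapsto\chi^{-1}$ (so the coefficients can be grouped over blocks $Q_\ell$), the conclusion $\mathbf{1}_{P_m}\in U=\spann\{\psi_\ell\}$, and from there both $\wwcP\leq\cP$ and $M\leq L$. The forward direction of the equivalence is also fine.

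The converse, however, has a genuine gap. You write that $|\cP|=|\wcP|$ yields ``equality of cardinalities'' which ``upgrades the inclusion $\wwcP\leq\cP$ to equality,'' but you never establish $|\wwcP|=|\cP|$: both of your derivations produce the same one-sided bound $|\wwcP|\geq|\cP|$, which is already an automatic consequence of $\wwcP\leq\cP$ and carries no new information (a strictly finer $\wwcP$ would satisfy everything you have written). What is missing is either the upper bound $|\wwcP|\leq|\cP|$ or, better, the reverse refinement $\cP\leq\wwcP$. The latter is available from the machinery you already built: if $M=L$, then the $M$ linearly independent indicators $\mathbf{1}_{P_1},\ldots,\mathbf{1}_{P_M}$ sit inside the $L$-dimensional space $U$ and hence form a basis of it; consequently each $\psi_\ell$ is a linear combination of the $\mathbf{1}_{P_m}$ and is therefore constant on every block of $\cP$. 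Since the blocks of $\wwcP$ are exactly the joint level sets of $(\psi_1,\ldots,\psi_L)$, this gives $\cP\leq\wwcP$, which together with $\wwcP\leq\cP$ forces $\wwcP=\cP$. With that one step inserted, your proof is complete.
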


We close this section with several specific instances of reflexive partitions that will be needed in the next section.

\begin{exa}\label{E-Punits}
Let~$R$ be a Frobenius ring with generating character~$\chi$, and denote by $\cP^{*,l}$ (resp.\ $\cP^{*,r}$) the partition
given by the orbits of the left (resp.\ right) action of~$R^*$ on~$R$.
Thus, the blocks of~$\cP^{*,l}$ are given by the distinct orbits $\cO_{x,l}=\{ux\mid u\in R^*\},\,x\in R$,
whereas the blocks of~$\cP^{*,r}$  are given by the orbits $\cO_{x,r}=\{xu\mid u\in R^*\}$.
It follows from \cite[Prop.~4.6]{BGL13} that
$\widehat{\cP^{*,l}}^{\scriptscriptstyle[\chi,r]}=\cP^{*,r}$ and
$\widehat{\cP^{*,r}}^{\scriptscriptstyle[\chi,l]}=\cP^{*,l}$.
In particular, the dual partitions do not depend on the choice of~$\chi$, and the partitions are reflexive.
\end{exa}

The Hamming partition and its Krawtchouk coefficients will be needed in the following form in the next section.

\begin{exa}\label{E-HammPart}
Let~$G=A_1\times\ldots\times A_n$, where~$A_i$ is a finite abelian group for all~$i$.
Then $\widehat{G}=\widehat{A_1}\times\ldots\times\widehat{A_n}$, i.e., the characters of~$G$ are given by
\[
    \big(\chi_1,\ldots,\chi_n\big)(a_1,\ldots,a_n)=\prod_{i=1}^n\chi_i(a_i).
\]
Assume $|A_i|=q$ for all~$i$, and let~$\cP$ be the partition of~$G$ induced by the Hamming weight, that is,
$P_m=\{a\in G\mid \wt(a)=m\}$ for $m\in[n]_0=\{0,\ldots,n\}$, and where $\wt(a_1,\ldots,a_n)=|\{i\mid a_i\neq0\}|$.
Then it is well known that the dual partition~$\wcP$ is the partition induced by the Hamming weight on
$\widehat{A_1}\times\ldots\times\widehat{A_n}$.
In other words, the blocks of~$\wcP$ are
$Q_\ell=\{\chi\in\widehat{G}\mid \wt(\chi)=\ell\}$ for $\ell\in[n]_0$, and where
$\wt(\chi_1,\ldots,\chi_n)=|\{i\mid \chi_i\neq\veps\}|$ since~$\veps$ is the zero element of~$\widehat{G}$.
The Krawtchouk coefficients are  $K_{\ell,m}=\sum_{a\in P_m}\chi(a)=K_m^{(n,q)}(\ell)$ for $\chi\in Q_\ell$, and
where
\begin{equation}\label{e-Krawtclassic}
          K_m^{(n,q)}(x)=\sum_{j=0}^m(-1)^j(q-1)^{m-j}{x\choose j}{n-x\choose m-j}
\end{equation}
is the Krawtchouk polynomial (see for instance \cite[Thm.~4.1]{Del73} or Lemma~2.6.2 in~\cite{HP03};
the proof of Lemma~2.6.2 in~\cite{HP03}, given for $\Z_q^n$, works mutatis mutandis for all
$A_1\times\ldots\times A_n$).
All of this shows that the Hamming partition on~$G$ is reflexive.
Finally, the Hamming partition on the module~$R^n$ is self-dual with respect to any generating character~$\chi$ by virtue
of~\eqref{e-RRnhat}.
\end{exa}

When studying the partition induced by the homogeneous weight, we will need to consider product partitions.
The following result will suffice for us.
Let $G=A_1\times\cdots\times A_n$, where $A_1,\ldots,A_n$ are finite abelian groups and let~$\cP_i$  be
reflexive partitions of~$A_i$ for $i=1,\ldots,n$.
Write $\cP_i=P_{i,1}\mmid P_{i,2}\mmid\ldots\mmid P_{i,M_i}$.
Then the \emph{product partition} on~$G$  is defined as
\begin{equation}\label{e-prodPart}
    \cQ:=\cP_1\times\ldots\times\cP_n:=(P_{1,m_1}\times\ldots\times P_{n,m_n})_{(m_1,\ldots,m_n)
  \in[M_1]\times\cdots\times[M_n]}.
\end{equation}
Parts of the next result also appeared in \cite[Thm.~4]{ZiEr09} and \cite[Thm.~4.86]{Cam98}.

\begin{theo}[\mbox{\cite[Thm.~4.3 and proof]{GL13pos}}]\label{T-IndProdPart}
The dual partition of~$\cQ$ is
$\widehat{\cQ}=\widehat{\cP_1}\times\cdots\times\widehat{\cP_n}$, and in particular,~$\cQ$ is reflexive.
Furthermore, the Krawtchouk coefficients of $(\cQ,\widehat{\cQ})$ are
\[
    K_{(\ell_1,\ldots,\ell_n),(m_1,\ldots,m_n)}=\prod_{i=1}^n K^{(i)}_{\ell_i,m_i} \text{ for all }
    (\ell_1,\ldots,\ell_n),\,(m_1,\ldots,m_n)\in[M_1]\times\cdots\times[M_n],
\]
where $K^{(i)}_{\ell,m},\,\ell,m\in[M_i]$, are the Krawtchouk coefficients of $(\cP_i,\widehat{\cP_i})$.
\end{theo}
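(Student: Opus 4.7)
The plan is to exploit the factorization of characters of a direct product, combined with the reflexivity criterion of Theorem~\ref{T-ReflCrit}, to pin down $\widehat{\cQ}$ by a block-counting argument.

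First I would record the standard fact (used already in Example~\ref{E-HammPart}) that every character of $G = A_1 \times \cdots \times A_n$ has the form $\chi(a_1,\ldots,a_n) = \prod_{i=1}^n \chi_i(a_i)$ for a unique tuple $(\chi_1,\ldots,\chi_n) \in \widehat{A_1}\times\cdots\times\widehat{A_n}$. The key computation is then that for any block $P_{1,m_1}\times\cdots\times P_{n,m_n}$ of $\cQ$,
\begin{equation}\label{e-factorization}
  \sum_{(a_1,\ldots,a_n) \in P_{1,m_1}\times\cdots\times P_{n,m_n}} \chi(a_1,\ldots,a_n)
  = \prod_{i=1}^n \sum_{a_i \in P_{i,m_i}} \chi_i(a_i).
\end{equation}
This immediately gives the Krawtchouk formula of the theorem once we know that the blocks of $\widehat{\cQ}$ are the products of blocks of the $\widehat{\cP_i}$, since each factor on the right is a Krawtchouk coefficient $K^{(i)}_{\ell_i,m_i}$ of $(\cP_i,\widehat{\cP_i})$.

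Next I would show that $\widehat{\cP_1}\times\cdots\times\widehat{\cP_n} \leq \widehat{\cQ}$. If $\chi_i \widesim_{\widehat{\cP_i}} \chi_i'$ for every $i$, then each factor $\sum_{a_i \in P_{i,m_i}} \chi_i(a_i)$ equals its primed counterpart by definition of $\widehat{\cP_i}$, and \eqref{e-factorization} then equates the full sums for $\chi$ and $\chi'$ on every block of $\cQ$. Hence $\chi \widesim_{\widehat{\cQ}} \chi'$, so every block of the product $\widehat{\cP_1}\times\cdots\times\widehat{\cP_n}$ lies inside a single block of $\widehat{\cQ}$.

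The main obstacle is upgrading this refinement to an equality, because in general the product partition need only be finer than $\widehat{\cQ}$. Here I would use the reflexivity criterion: by Theorem~\ref{T-ReflCrit}, $|\cQ| \leq |\widehat{\cQ}|$, and since each $\cP_i$ is reflexive, $|\cP_i| = |\widehat{\cP_i}| = M_i$. Counting blocks of the product partitions yields
\[
  |\widehat{\cP_1}\times\cdots\times\widehat{\cP_n}| = \prod_{i=1}^n M_i = |\cQ|.
\]
Combined with the refinement just established, this forces
\[
  |\cQ| \leq |\widehat{\cQ}| \leq |\widehat{\cP_1}\times\cdots\times\widehat{\cP_n}| = |\cQ|,
\]
so all three block counts coincide. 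Equality of block counts together with $\widehat{\cP_1}\times\cdots\times\widehat{\cP_n} \leq \widehat{\cQ}$ gives $\widehat{\cQ} = \widehat{\cP_1}\times\cdots\times\widehat{\cP_n}$, and reflexivity of $\cQ$ follows from the same criterion via $|\cQ| = |\widehat{\cQ}|$. Finally, the Krawtchouk formula drops out of \eqref{e-factorization} as noted at the start.
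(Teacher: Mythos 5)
The paper does not prove this statement itself --- it imports it from \cite[Thm.~4.3 and proof]{GL13pos} --- so there is no in-paper proof to compare against. Your argument is correct and is essentially the standard one for this result: the factorization of character sums over product blocks gives the refinement $\widehat{\cP_1}\times\cdots\times\widehat{\cP_n}\leq\widehat{\cQ}$ and the product formula for the Krawtchouk coefficients, and the block count $|\cQ|\leq|\widehat{\cQ}|\leq\prod_i|\widehat{\cP_i}|=\prod_i M_i=|\cQ|$ (where reflexivity of each $\cP_i$ is genuinely needed for the last two equalities) forces equality throughout and hence reflexivity of $\cQ$ via Theorem~\ref{T-ReflCrit}.
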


The following result shows that the trivial extension of a partition of a subgroup behaves well under dualization.
The notation $\chi_{|H}$ stands for the restriction of~$\chi$ to~$H$, whereas~$\veps_G$ and~$\veps_H$ denote the principal
characters on~$G$ and~$H$, respectively.
\begin{prop}\label{P-PartSubgroup}
Let $H\leq G$ be a subgroup of~$G$, and let $\cP=P_0\mmid\ldots\mmid P_M$ be a partition of~$H$.
Let $\wcP=Q_0\mmid Q_1\mmid\ldots\mmid Q_L$, where $Q_0=\{\veps_H\}$ (see Remark~\ref{R-Dual0}(a)).
Thus~$\wcP$ is a partition of~$\widehat{H}$.
Define $P_{-1}:=G\,\backslash\, H$.
Then $\cP'=P_{-1}\mmid P_0\mmid\ldots\mmid P_M$ is a partition of~$G$.
The dual partition of~$\cP'$ is given by $\widehat{\cP'}=Q'_{-1}\mmid Q'_0\mmid Q'_1,\ldots\mmid Q'_L$, where
$Q'_{-1}=H^{\perp}\backslash\{\veps_G\},\, Q'_0=\{\veps_G\}$ and
$Q'_\ell=\{\chi\in\widehat{G}\,\backslash\, H^{\perp}\mid \chi_{|H}\in Q_\ell\}$ for $\ell\in[L]$.
The Krawtchouk coefficients of $(\cP',\widehat{\cP'})$ are
\begin{equation}\label{e-Kraw1}
    (K'_{\ell,m})_{\ell=-1,\ldots,L\atop m=-1,\ldots,M}=
    \begin{pmatrix}-|H|&|P_0|&\cdots &|P_M|\\ |P_{-1}|& K_{0,0}&\ldots & K_{0,M}\\
                       0& K_{1,0}&\ldots & K_{1,M}\\ \vdots&\vdots & &\vdots\\ 0& K_{L,0}&\ldots & K_{L,M}\end{pmatrix},
\end{equation}
where $K_{\ell,m},\,\ell\in[L]_0,\,m\in[M]_0$, are the Krawtchouk coefficients of $(\cP,\wcP)$.
As a consequence, if~$\cP$ is reflexive then so is $\cP'$.
\end{prop}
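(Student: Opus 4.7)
The plan is to compute, for an arbitrary $\chi\in\widehat{G}$, the character sums $\sum_{g\in P_m}\chi(g)$ for every block $P_m$ of $\cP'$ (including $m=-1$), then read off both the dual partition and its Krawtchouk matrix from these values.

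First I would handle the new block $P_{-1}=G\setminus H$. Writing $\sum_{g\in P_{-1}}\chi(g)=\sum_{g\in G}\chi(g)-\sum_{g\in H}\chi(g)$ and applying the orthogonality relation~\eqref{e-RChar} to both groups (and using that $\chi_{|H}=\veps_H$ iff $\chi\in H^\perp$), one sees that this sum equals $|P_{-1}|=|G|-|H|$ when $\chi=\veps_G$, equals $-|H|$ when $\chi\in H^\perp\setminus\{\veps_G\}$, and equals $0$ when $\chi\notin H^\perp$. For the remaining blocks $P_m\subseteq H$ with $m\geq 0$, the sum $\sum_{g\in P_m}\chi(g)$ depends only on $\chi_{|H}\in\widehat{H}$; when $\chi\in H^\perp$ it collapses to $|P_m|$, and otherwise it is the Krawtchouk coefficient $K_{\ell,m}$ where $\ell$ is determined by $\chi_{|H}\in Q_\ell$. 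Note that the surjection $\widehat{G}\to\widehat{H}$, $\chi\mapsto\chi_{|H}$ from~\eqref{e-Hhat} has kernel $H^\perp$, so every block $Q_\ell$ with $\ell\geq 1$ pulls back to a nonempty subset of $\widehat{G}\setminus H^\perp$.

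Next I would compare the resulting $(M+2)$-tuples of sums to identify the equivalence classes. The three cases above yield exactly three types of rows: $(|P_{-1}|,|P_0|,\ldots,|P_M|)$ is realized only by $\veps_G$, giving $Q'_0$; $(-|H|,|P_0|,\ldots,|P_M|)$ is realized exactly on $H^\perp\setminus\{\veps_G\}$, giving $Q'_{-1}$; and for each $\ell\in[L]$ the tuple $(0,K_{\ell,0},\ldots,K_{\ell,M})$ is realized on $\{\chi\in\widehat{G}\setminus H^\perp\mid\chi_{|H}\in Q_\ell\}=Q'_\ell$. Since distinct $\ell$ give distinct rows (by definition of $\wcP$ applied to the $(K_{\ell,m})$-part), these really are the blocks of $\widehat{\cP'}$. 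The matrix~\eqref{e-Kraw1} is then just a transcription of these computations, using $K_{0,m}=|P_m|$.

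Finally, for the reflexivity statement I would invoke Theorem~\ref{T-ReflCrit}. If $\cP$ is reflexive then $|\cP|=|\wcP|$, i.e., $M=L$, and the explicit description of the blocks above gives $|\cP'|=M+2=L+2=|\widehat{\cP'}|$, whence $\cP'$ is reflexive by the same criterion. The only steps that require any care are the case analysis separating $H^\perp\setminus\{\veps_G\}$ from $\widehat{G}\setminus H^\perp$ and the verification that no two of the three row types ever coincide; these are the only places where one must genuinely use the assumption that $Q_0=\{\veps_H\}$ is a singleton block of $\wcP$ so that the row for $Q'_{-1}$ cannot be confused with the row for $Q'_0$ or any $Q'_\ell$, $\ell\geq 1$.
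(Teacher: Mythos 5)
Your proposal is correct and follows essentially the same route as the paper's proof: the same three-way case analysis on $\chi$ (principal, in $H^{\perp}\setminus\{\veps_G\}$, or outside $H^{\perp}$), the same use of orthogonality to evaluate the sums over $P_{-1}=G\setminus H$, and the same appeal to the distinctness of the rows of the Krawtchouk matrix (coming from $\wcP$ being the dual of $\cP$) plus Theorem~\ref{T-ReflCrit} for reflexivity. The only (welcome) extra care you take is noting via~\eqref{e-Hhat} that each $Q_\ell$ pulls back to a nonempty $Q'_\ell$, a point the paper leaves implicit.
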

\begin{proof}
We have to consider various cases.
\\
1) Let $\chi\in Q'_\ell$ for $\ell\in[L]$.
Then $\sum_{a\in P_m}\chi(a)=\sum_{a\in P_m}\chi_{|H}(a)=K_{\ell,m}$ for each for $m\in[M]_0$.
Furthermore,~\eqref{e-RChar} yields $\sum_{a\in G\backslash H}\chi(a)=-\sum_{a\in H}\chi(a)=-\sum_{a\in H}\chi_{|H}(a)$.
But the latter is zero due to~\eqref{e-RChar} since $\chi_{|H}$ is not the principal character on~$H$.
\\
2) Next, let $\chi\in Q'_{-1}=H^{\perp}\backslash\{\veps_G\}$. Then $\chi(a)=1$ for all $a\in P_m,\,m\in[M]_0$.
Thus $\sum_{a\in P_m}\chi(a)=|P_m|$ for all $m\in[M]_0$.
Moreover,  $\sum_{a\in G\backslash H}\chi(a)=-\sum_{a\in H}\chi(a)=-|H|$.
\\
3) For $\chi=\veps_G$ we obviously have $\sum_{a\in P_m}\chi(a)=|P_m|$ for all $m\in\{-1,0,\ldots,M\}$.
For the same reason,  $|P_m|=K_{0,m}$ for all $m\in[M]_0$.
\\
Evidently, in all of these cases the sums do not depend on the specific choice of~$\chi$ within the specified set,
and thus the partition $\cQ:=Q'_{-1}\mmid Q'_0\mmid Q'_1,\ldots\mmid Q'_L$ is finer than or equal to
$\widehat{\cP'}$.
The above also establishes the Krawtchouk coefficients stated in~\eqref{e-Kraw1}.
Since~$\wcP$ is the dual partition of~$\cP$, Definition~\ref{D-DualPart}(a) implies that no two rows of the matrix
in~\eqref{e-Kraw1} coincide.
This means that  if $\chi\in Q'_\ell$ and $\chi'\in Q'_{\ell'}$, where $\ell\neq\ell'$, then
$\chi\;\not\hspace*{-.5em}\widesim_{\widehat{\cP'}}\chi'$.
Thus $\widehat{\cP'}=\cQ$, as desired.
The statement concerning reflexivity follows from Theorem~\ref{T-ReflCrit}.
\end{proof}

\section{Explicit Values of the Homogeneous Weight}\label{SS-homogWt}
In this section we consider the homogeneous weight and determine its values
for those finite Frobenius rings that are isomorphic to a product of local rings.
Due to Remark~\ref{R-FrobProp}(e) this includes all finite commutative Frobenius rings.
In the subsequent section, the results will be used to study the partition induced by the homogeneous weight.

Throughout, let~$R$ be a finite Frobenius ring with group of units~$R^*$, and fix a generating character~$\chi$.
The following definition is taken from  Greferath and Schmidt~\cite{GrSch00}.
\begin{defi}\label{D-homogWt}
The \emph{(left) homogeneous weight on~$R$ with average value}~$\gamma$ is a function
$\omega:R\longrightarrow\Q$ such that
\begin{romanlist}
\item $\omega(0)=0$,
\item $\omega(x)=\omega(y)$ for all $x,y\in R$ such that $Rx=Ry$ ,
\item $\sum_{y\in Rx}\omega(y)=\gamma|Rx|$ for all $x\in R\backslash\{0\}$; in other words, the average weight over each
      nonzero principal ideal is~$\gamma$.
\end{romanlist}
\end{defi}
In~\cite[Thm.~1.3]{GrSch00} Greferath/Schmidt proved the existence and uniqueness of the homogeneous weight with given average value for any finite ring, and in~\cite[Cor.~1.6]{GrSch00} the same authors show that~(iii) is satisfied by the homogeneous weight for all nonzero ideals of~$R$ (for this result the Frobenius property is essential).

It is easy to see that the Hamming weight on~$R$ is homogeneous if and only if~$R$ is a field, in which case it has average value $\frac{q-1}{q}$, where $q=|R|$.

Without loss of generality we restrict ourselves to the homogeneous weight with average value $\gamma=1$, which we call
the \emph{normalized homogeneous weight}.
Thanks to Honold~\cite[p.~412]{Hon01}, an explicit formula for the normalized homogeneous weight on a Frobenius
ring is known and reads as
\begin{equation}\label{e-HomWt}
   \omega(r)=1-\frac{1}{|R^*|}\sum_{u\in R^*}\chi(ru)  \text{ for } r\in R;
\end{equation}
see \cite[Prop.1.3]{GrO04} for a short proof verifying that the function satisfies Definition~\ref{D-homogWt}(i)~--~(iii).
A specific instance of this formula, tailored to Galois rings, appears also in~\cite{VoWa03}.
It follows from~\eqref{e-HomWt} that the left homogeneous weight is also right homogeneous, i.e., it satisfies
the right counterparts of Definition~\ref{D-homogWt}(ii) and~(iii); see \cite[Th.~2]{Hon01}.
As already observed in the proof of \cite[Th.~2]{Hon01}, we also have
\begin{equation}\label{e-HomWtleft}
     \omega(r)=1-\frac{1}{|R^*|}\sum_{u\in R^*}\chi(ur)  \text{ for } r\in R.
\end{equation}
This follows simply from the bimodule structure of~$\widehat{R}$, see~\eqref{e-bimodule}, along with the fact that all generating characters of~$R$ are given by $u\!\cdot\!\chi$, where $u\in R^*$,  as well as by $\chi\!\cdot\!u$, where $u\in R^*$.

Identities~\eqref{e-HomWt} and~\eqref{e-HomWtleft} yield
\begin{equation}\label{e-KrawHom}
  \sum_{u\in R^*}\chi(ru)=\sum_{u\in R^*}\chi(ur)=|R^*|(1-\omega(r)) \text{ for }r\in R,
\end{equation}
which will be useful later for determining the values of the homogeneous weight on arbitrary Frobenius rings.

\begin{defi}\label{D-Phom}
The partition of~$R$ induced by homogeneous weight is denoted by~$\cPhom$.
It is thus given by the equivalence relation
$x\widesim_{\cPhom} x'\Longleftrightarrow \omega(x)=\omega(x') \text{ for }x,\,x'\in R$.
In other words, the blocks of~$\cPhom$ are the sets consisting of all elements sharing the same homogeneous weight.
\end{defi}

Recall the partitions $\cP^{*,l}$ and~$\cP^{*,r}$ induced by the left and right action of~$R^*$ on~$R$, see
Example~\ref{E-Punits}.
Definition~\ref{D-homogWt}(ii) implies $\cP^{*,l}\leq\cPhom$ and~\eqref{e-HomWt} yields
$\cP^{*,r}\leq\cPhom$.
Thus,  $\cP^{*,r}\leq\widehat{\cPhom}^{\scriptscriptstyle[\chi,r]}$ and
$\cP^{*,l}\leq\widehat{\cPhom}^{\scriptscriptstyle[\chi,l]}$ due to Example~\ref{E-Punits} and Remark~\ref{R-Dual0}(b).

Using the fact that the orbit~$\cO_{x,r}$ of~$x\in R$ is given by $\cO_{x,r}=\{xu_1,\ldots,xu_m\}$, where
$u_1,\ldots,u_m$ are representatives of the distinct right cosets of the stabilizer subgroup of~$x\in R$, one obtains
\begin{equation}\label{e-OrbitUnits}
   \sum_{b\in \cO_{x,r}}\chi(ab)=\frac{|\cO_{x,r}|}{|R^*|}\sum_{u\in R^*}\chi(axu)\text{ for all }a,\,x\in R.
\end{equation}

Now we can summarize the following simple properties, which will be needed later when computing the dual partition
of~$\cPhom$.
The latter will be defined as the partition $\widehat{\cPhom}^{\scriptscriptstyle[\chi,r]}$ of~$R$ in the sense of
Definition~\ref{D-DualPart}(b).
We will now see that the dual does not depend on the choice of~$\chi$.

\begin{rem}\label{R-UnitOrbits}
Consider a block~$P$ of the partition~$\cPhom$.
Then $P=\bigcup_{i=1}^M\hspace*{-2em}\raisebox{.5ex}{$\cdot$}\hspace*{1.7em}\cO_{x_i,r}$ for a certain
number~$M$ of distinct orbits~$\cO_{x_i,r}$.
With the aid of~\eqref{e-OrbitUnits} and~\eqref{e-KrawHom} we obtain for any $a\in R$
         \begin{align*}
            \sum_{b\in P}\chi(ab)&=\sum_{i=1}^M\sum_{b\in\cO_{x_i,r}}\chi(ab)
                 =\sum_{i=1}^M \frac{|\cO_{x_i,r}|}{|R^*|}\sum_{u\in R^*}\chi(ax_iu)\\
               &=\frac{1}{|R^*|}\sum_{i=1}^M|\cO_{x_i,r}|\big(|R^*|(1-\omega(ax_i)\big)
                 =\sum_{i=1}^M|\cO_{x_i,r}|(1-\omega(ax_i)).
         \end{align*}
This shows that the sum $\sum_{b\in P}\chi(ab)$ does not depend on the choice of the generating character~$\chi$.
As a consequence, the dual partition $\widehat{\cPhom}^{\scriptscriptstyle[\chi,r]}$ does not depend  on~$\chi$.
We will therefore simply denote this partition by~$\wcPhomr$.
Thus
\[
  a\widesim[2]_{\wcPhomr\,}a'\Longleftrightarrow \sum_{b\in P}\chi(ab)=\sum_{b\in P}\chi(a'b)
  \text{ for all blocks $P$ of }\cPhom.
\]
Note also that the Krawtchouk coefficients of $(\cPhom,\wcPhomr)$ are real.
One may also observe that the above implies $\cP^{*,l}\leq\wcPhomr$.
\end{rem}

Similar statements can be obtained for the left dual of~$\cPhom$, which then also does not depend on~$\chi$.

The following examples illustrate that the homogeneous partition on a Frobenius ring may display a variety of different properties.
Since all examples are commutative we will simply write $\wcPhom$ instead of $\wcPhomr$.

\begin{exa}\label{E-Phom}
\begin{alphalist}
\item On~$\Z_8$ the homogeneous partition is given by $\cPhom=0\mid 1,2,3,5,6,7\mid 4$.
        Its dual is $\wcPhom=0\mid 1,3,5,7\mid 2,4,6$, which has been observed already in~\cite[Ex.~2.9]{BGO07}.
        Thus $\cPhom$ is not self-dual, but reflexive, due to Theorem~\ref{T-ReflCrit}.
\item On $R=\Z_2\times\Z_2$ the homogeneous partition is easily seen to be $00,11\mid 01,10$, and the values of the weight
         are~$0$ and~$2$, respectively.
         The fact that $\{00\}$ is not a block of~$\cPhom$ shows that~$\cPhom$ is not the dual of any partition;
        see Remark~\ref{R-Dual0}(a).
        In particular,~$\cPhom$ is not reflexive and thus not self-dual.
        In Remark~\ref{R-HomWtValues}(b) a characterization will be presented for the Frobenius rings that contain a
        nonzero element with homogeneous weight zero.
\item Consider $R=\Z_3\times\Z_3$.
       In this case one easily verifies that $\cPhom=00\mmid 10,20,01,02\mmid 11,12,21,22$ with normalized homogeneous weights
       $0,\,3/2$, and~$3/4$, respectively.
       Note that~$\cPhom$ is simply the Hamming partition on~$\Z_3\times\Z_3$, and thus reflexive and even
       self-dual, i.e., $\cPhom=\wcPhom$.
       This will also follow from Theorem~\ref{T-PhomAdmiss}.
\item Consider $\F_4=\{0,1,\alpha,\alpha^2\}$ and the ring $R=\Z_2\times\F_4$.
        Using for instance the character~$\chi$ on~$\F_4$ given in
         Example~\ref{E-FPartNotIndep}
        one derives
        $\cPhom=00\mid 10\mid  01, 0\alpha,0\alpha^2\mid 11,1\alpha,1\alpha^2$, and the values of the weight are
        $0,\,2,\,4/3$, and $2/3$, respectively.
        Again, it will follow from Theorem~\ref{T-PhomAdmiss} that $\cPhom$ is self-dual and thus reflexive.
\end{alphalist}
\end{exa}

In order to determine the values of the homogeneous weight explicitly, we start with the following well-known case.
\begin{exa}\label{E-LocFrob}
Let~$R$ be a local Frobenius ring with residue field $R/\rad(R)$ of order~$q$.
Then the normalized homogeneous weight is given by
\[
     \omega(a)=\left\{\begin{array}{cl}
                                     0,&\text{if }a=0,\\ \frac{q}{q-1},&\text{if } a\in \soc(R)\backslash\{0\},\\1,&\text{otherwise}.
                                  \end{array}\right.
\]
This can be verified immediately using the fact that $\soc(R)$ is the unique minimal left ideal~\cite[Ex.~(3.14)]{Lam99}
and thus contained in any nonzero left ideal of~$R$;
see also~\cite[Ex.~2.8]{BGO07} for an argument involving the M\"obius function for the lattice of ideals of~$R$.
Hence $|\cPhom|=3$.
In Theorem~\ref{T-PhomAdmiss} (see also Example~\ref{E-homogWt2}(a)) we will see that $\cPhom$ is reflexive, and we
will also determine the dual partition.
\end{exa}

We now proceed to determine the values of the homogeneous weight for
finite Frobenius rings that can be written as the direct product of local Frobenius rings.
Thanks to Remark~\ref{R-FrobProp}(e),  this covers all finite commutative Frobenius rings.
We first summarize some basic properties for direct products of Frobenius rings.

\begin{rem}\label{R-ProdFrob}
Let $R=R_1\times\ldots\times R_t$, where each~$R_i$ is a finite (not necessarily local)  Frobenius ring.
Furthermore, let~$\chi_i$ be a generating character of~$R_i$ for all $i\in[t]$.
Then~$R$ is a Frobenius ring with character module $\widehat{R}=\widehat{R_1}\times\ldots\times\widehat{R_t}$; see also
Example~\ref{E-HammPart} for the group version.
A generating character~$\chi$ of~$R$ is given by $\chi:=(\chi_1,\ldots,\chi_t)$ defined as
\begin{equation}\label{e-prodchi}
   \chi(a_1,\ldots,a_t)=\prod_{i=1}^t\chi_i(a_i) \text{ for }(a_1,\ldots,a_t)\in R.
\end{equation}
Recall also that $\rad(R)=\rad(R_1)\times\ldots\times\rad(R_t)$ and
$\soc(R)=\soc(R_1)\times\ldots\times\soc(R_t)$.
Finally,
$R=\soc(R)\Longleftrightarrow\rad(R)=\{0\}\Longleftrightarrow \rad(R_i)=\{0\}\text{ for all }i
\Longleftrightarrow R$ is semisimple.
This last case will be of particular interest to us.
\end{rem}

\medskip
The above leads to the following identity for the values of the homogenous weight.
A similar formula, but in terms of the M\"obius function and for the case where all~$R_i$ are commutative principal ideal rings,
appeared in~\cite[Thm.~4.1]{FaLi10}, see~\cite[p.~4]{FLL13}.
\begin{prop}\label{P-homogProduct}
Let~$R=R_1\times\ldots\times R_t$ be a direct product of (not necessarily local) Frobenius rings.
For $i\in[t]$ let $\omega_i$ be the normalized homogeneous weight on~$R_i$.
Then the normalized homogeneous weight on~$R$ is given by
\[
   \omega(a_1,\ldots,a_t)=1-\prod_{i=1}^t\big(1-\omega_i(a_i)\big) \text{ for }(a_1,\ldots,a_t)\in R.
\]
\end{prop}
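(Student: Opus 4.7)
The plan is to apply Honold's explicit formula~\eqref{e-HomWt} to the product ring $R = R_1\times\ldots\times R_t$ and exploit the fact that both the unit group and the generating character factor through the direct product decomposition. This reduces the sum $\sum_{u\in R^*}\chi(ru)$ to a product of analogous sums over each $R_i^*$, and~\eqref{e-KrawHom} will translate each of these factors back into an expression in $\omega_i(a_i)$.

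First I would record the ingredients assembled in Remark~\ref{R-ProdFrob}: the unit group is $R^* = R_1^* \times \ldots \times R_t^*$, so in particular $|R^*| = \prod_{i=1}^t |R_i^*|$; a generating character of $R$ may be taken in the product form $\chi = (\chi_1,\ldots,\chi_t)$ with $\chi(a_1,\ldots,a_t) = \prod_i \chi_i(a_i)$; and multiplication in $R$ is componentwise, so for $r = (a_1,\ldots,a_t)$ and $u = (u_1,\ldots,u_t) \in R^*$ we have $\chi(ru) = \prod_i \chi_i(a_i u_i)$.

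Next I would compute, using Fubini for finite products,
\begin{equation*}
   \sum_{u\in R^*} \chi(ru)
   \;=\; \sum_{(u_1,\ldots,u_t)\in R_1^*\times\cdots\times R_t^*} \prod_{i=1}^t \chi_i(a_i u_i)
   \;=\; \prod_{i=1}^t \sum_{u_i\in R_i^*} \chi_i(a_i u_i).
\end{equation*}
Applying~\eqref{e-KrawHom} to each $R_i$ (with its own generating character $\chi_i$) gives $\sum_{u_i \in R_i^*} \chi_i(a_i u_i) = |R_i^*|\bigl(1 - \omega_i(a_i)\bigr)$. Substituting back into~\eqref{e-HomWt} and cancelling the factor $|R^*| = \prod_i |R_i^*|$ yields exactly
\begin{equation*}
   \omega(a_1,\ldots,a_t) \;=\; 1 - \prod_{i=1}^t \bigl(1 - \omega_i(a_i)\bigr),
\end{equation*}
as claimed.

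There is no real obstacle: the computation is a routine multiplicative bookkeeping once one recognizes that Honold's formula converts the product structure on $R$ directly into the desired product formula. The only subtle point worth flagging is the freedom in choosing the generating character — but since both sides of~\eqref{e-HomWt} are independent of that choice, it is harmless to pick $\chi$ in the factored form~\eqref{e-prodchi} for the computation.
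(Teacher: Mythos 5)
Your proposal is correct and follows essentially the same route as the paper's own proof: factor $R^*$ and the generating character through the product decomposition, convert $\sum_{u\in R^*}\chi(ru)$ into $\prod_i\sum_{u_i\in R_i^*}\chi_i(a_iu_i)$, and apply~\eqref{e-KrawHom} componentwise before substituting into~\eqref{e-HomWt}. Your closing remark about the choice of generating character being harmless is a nice touch but matches what the paper implicitly relies on via Remark~\ref{R-ProdFrob}.
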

\begin{proof}
Let~$\chi_i$ be a generating character of~$R_i$.
Then Remark~\ref{R-ProdFrob} provides us with a generating
character~$\chi$ of~$R$.
Using~\eqref{e-KrawHom} and $|R^*|=\prod_{i=1}^t |R_i^*|$, we compute
\begin{align*}
 \sum_{(u_1,\ldots,u_t)\in R^*}\chi(a_1u_1,\ldots,a_tu_t)
       &=\sum_{u_1\in R_1^*,\ldots,u_t\in R_t^*}\prod_{i=1}^t\chi_i(a_iu_i)
        =\prod_{i=1}^t\sum_{u\in R_i^*}\chi_i(a_iu)=\prod_{i=1}^t|R_i^*|\big(1-\omega_i(a_i)\big)\\
     &=|R^*|\prod_{i=1}^t\big(1-\omega_i(a_i)\big).
\end{align*}
Now~\eqref{e-HomWt} leads to the desired identity.
\end{proof}

This result brings us immediately to an explicit formula for the homogeneous weight in the following case.

\begin{prop}\label{P-HomogProdq}
Let $R=R_1\times\ldots\times R_t$, where each~$R_i$ is a finite local Frobenius ring
with residue field $R_i/\rad(R_i)$ of order~$q$.
Then
\[
     \omega(a)=\left\{\begin{array}{cl}{\DS 1-\Big(\frac{-1}{q-1}\Big)^{\wt(a)}}
         &\text{if }a\in\soc(R),\\[1.7ex]
        1,&\text{otherwise,}\end{array}\right.
\]
where $\wt(a):=|\{i\mid a_i\neq 0\}|$ denotes the Hamming weight of $a=(a_1,\ldots,a_t)$.
\end{prop}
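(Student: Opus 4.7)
The plan is to combine the product formula from Proposition~\ref{P-homogProduct} with the explicit values for the local case from Example~\ref{E-LocFrob}, and then to split the argument into two cases according to whether $a$ lies in $\soc(R)$ or not.

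First, by Proposition~\ref{P-homogProduct}, for $a=(a_1,\ldots,a_t)\in R$ we have
\[
   \omega(a)=1-\prod_{i=1}^t\bigl(1-\omega_i(a_i)\bigr),
\]
where $\omega_i$ denotes the normalized homogeneous weight on~$R_i$. By Example~\ref{E-LocFrob}, for each local Frobenius ring $R_i$ with residue field of order $q$ one has
\[
   1-\omega_i(a_i)=\begin{cases} 1, & a_i=0,\\ -\tfrac{1}{q-1}, & a_i\in\soc(R_i)\setminus\{0\},\\ 0, & a_i\notin\soc(R_i).\end{cases}
\]

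Now I would invoke Remark~\ref{R-ProdFrob}, which gives $\soc(R)=\soc(R_1)\times\cdots\times\soc(R_t)$, and consider the two cases separately. If $a\in\soc(R)$, then each $a_i\in\soc(R_i)$, so the factor $1-\omega_i(a_i)$ equals $1$ when $a_i=0$ and equals $-1/(q-1)$ when $a_i\neq 0$. Taking the product over all $i$ yields
\[
   \prod_{i=1}^t\bigl(1-\omega_i(a_i)\bigr)=\Bigl(\frac{-1}{q-1}\Bigr)^{\wt(a)},
\]
which, inserted into the product formula, gives the first branch of the claimed expression.

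If $a\notin\soc(R)$, then there exists at least one index~$i$ with $a_i\notin\soc(R_i)$; for this index the corresponding factor $1-\omega_i(a_i)$ is zero. Hence the whole product vanishes and $\omega(a)=1$, which yields the second branch. There is no real obstacle here once Proposition~\ref{P-homogProduct} and Example~\ref{E-LocFrob} are in place; the only point to be careful about is keeping the bookkeeping straight for elements with some zero and some nonzero components when $a\in\soc(R)$, where the exponent $\wt(a)$ counts precisely the indices contributing a factor of $-1/(q-1)$.
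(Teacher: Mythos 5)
Your proof is correct and follows exactly the paper's own route: the paper derives this proposition as an immediate consequence of Proposition~\ref{P-homogProduct}, Example~\ref{E-LocFrob}, and the decomposition $\soc(R)=\soc(R_1)\times\cdots\times\soc(R_t)$, which is precisely the argument you spell out. Your version just makes the case analysis explicit.
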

\begin{proof}
This is a consequence of Example~\ref{E-LocFrob} along with
Proposition~\ref{P-homogProduct} and the fact that $\soc(R)=\soc(R_1)\times\ldots\times\soc(R_t)$.
\end{proof}

In the same way we can compute the homogeneous weight on any finite Frobenius ring that is given as
as a direct product of local rings.
We will need to keep track of the orders of the residue fields of the component rings and thus fix the following notation.
For the rest of this paper, let~$R$ be a finite Frobenius ring of the form
\begin{equation}\label{e-Rform}
\left.
\begin{array}{l}
  R=R_1\times\ldots\times R_t, \text{ where }R_i=R_{i,1}\times\ldots\times R_{i,n_i}\text{ with $R_{i,j}$ local} ,\\[1ex]
  |R_{i,j}/\rad(R_{i,j})|=|\soc(R_{i,j})|=q_i\text{ for all }j\in[n_i]\text{ and }q_1,\ldots,q_t\text{ distinct.}
\end{array}
\quad\right\}
\end{equation}
Recall that $\soc(R_{i,j})\cong R_{i,j}/\rad(R_{i,j})$.

Propositions~\ref{P-homogProduct} and~\ref{P-HomogProdq} yield the following generalization of Example~\ref{E-LocFrob}.
\begin{theo}\label{T-HomWtGen}
Let~$R$ be as in~\eqref{e-Rform} and write its elements as $a=(a_1,\ldots,a_t)$, where $a_i\in R_i$.
Using the Hamming weight~$\wt$ on each~$R_i$, the homogeneous weight on~$R$ is given by
\[
   \omega(a_1,\ldots,a_t)=\left\{\begin{array}{cl}
      1-\prod_{i=1}^t\big(\frac{-1}{q_i-1}\big)^{\wt(a_i)},&\text{if }a\in\soc(R),\\[1ex] 1,&\text{otherwise}.\end{array}\right.
\]
\end{theo}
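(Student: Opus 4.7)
The proof is essentially an iterated application of the two immediately preceding propositions, so the plan is to unpack it cleanly in two layers.

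First, I would apply Proposition~\ref{P-homogProduct} to the outer decomposition $R=R_1\times\ldots\times R_t$. Writing $\omega_i$ for the normalized homogeneous weight on $R_i$, this gives
\[
  \omega(a_1,\ldots,a_t)=1-\prod_{i=1}^t\bigl(1-\omega_i(a_i)\bigr).
\]
Since each $R_i=R_{i,1}\times\ldots\times R_{i,n_i}$ is itself a direct product of local Frobenius rings, all with residue field of the same order $q_i$, Proposition~\ref{P-HomogProdq} applies to $R_i$ and yields
\[
  1-\omega_i(a_i)=\begin{cases}\bigl(\tfrac{-1}{q_i-1}\bigr)^{\wt(a_i)}, & a_i\in\soc(R_i),\\ 0, & a_i\notin\soc(R_i).\end{cases}
\]

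Second, I would combine the two layers via the socle identity recalled in Remark~\ref{R-ProdFrob}: since each $R_i$ itself decomposes, one has
\[
  \soc(R)=\soc(R_1)\times\ldots\times\soc(R_t),
\]
so $a\in\soc(R)$ if and only if $a_i\in\soc(R_i)$ for every $i\in[t]$. The proof then splits into two cases. If $a\in\soc(R)$, every factor $1-\omega_i(a_i)$ equals $(-1/(q_i-1))^{\wt(a_i)}$, and the product formula in the theorem drops out. If $a\notin\soc(R)$, there is at least one index $i$ with $a_i\notin\soc(R_i)$, for which $1-\omega_i(a_i)=0$; the entire product vanishes and $\omega(a)=1$.

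There is no real obstacle here: the content is packaged in the two propositions already proved, and the only thing to check is the compatibility of the socle decomposition across the two layers of the direct product. The sole care point is to make sure that the Hamming weight $\wt(a_i)$ in the final formula is indeed the weight on $R_i$ (i.e., counts the nonzero coordinates among $R_{i,1},\ldots,R_{i,n_i}$), which is exactly the Hamming weight used in the statement of Proposition~\ref{P-HomogProdq} applied to $R_i$.
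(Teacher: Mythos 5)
Your proposal is correct and matches the paper exactly: the paper gives no separate proof, simply stating that Propositions~\ref{P-homogProduct} and~\ref{P-HomogProdq} yield the theorem, which is precisely the two-layer argument (outer product formula, then the constant-$q_i$ formula on each $R_i$, glued via $\soc(R)=\soc(R_1)\times\ldots\times\soc(R_t)$) that you spell out.
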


We close this section with the following immediate insight about the induced partition.

\begin{rem}\label{R-HomWtValues}\
\begin{alphalist}
\item $\omega(a)\neq1$ for all $a\in\soc(R)$. Thus $R\,\backslash\,\soc(R)$ is a block of $\cPhom$.
        This generalizes the situation for local Frobenius rings in Example~\ref{E-LocFrob}.
\item There exists a nonzero $a\in R$ such that $\omega(a)=0$ if and only if there exists an~$i\in[t]$ such that $q_i=2$
        and $n_i\geq2$.
        In this case,~$\cPhom$ is not reflexive due to Remark~\ref{R-Dual0}(a).
        Example~\ref{E-Phom}(b) is the smallest such ring.
\end{alphalist}
\end{rem}

A detailed study of the homogeneous partition and its dual is presented in the next section.

\section{The Partition Induced by the Homogeneous Weight}\label{SS-HomogPart}
We characterize the rings for which the partition~$\cPhom$ induced by the homogeneous weight
is reflexive or even self-dual.
Thanks to Proposition~\ref{P-leftrightbidual} we may and will restrict ourselves to the right dual of $\cPhom$ in order to study
self-duality (and reflexivity).
In the reflexive case we also determine the right dual partition and the Krawtchouk coefficients explicitly.

Recall from Remark~\ref{R-UnitOrbits} that the right dual partition of $\cPhom$ in the
sense of Definition~\ref{D-DualPart}(b) does not depend
on the choice of the generating character and is denoted by~$\wcPhomr$.

Some basic properties of~$\cPhom$ were presented already in Remark~\ref{R-HomWtValues}.
We now focus on a simple consequence of Theorem~\ref{T-HomWtGen} that will turn out to be crucial for characterizing reflexivity.
Recall that~$R$ is as in~\eqref{e-Rform}.
Theorem~\ref{T-HomWtGen} shows that the homogeneous weight of $a=(a_1,\ldots,a_t)\in R$ depends on the Hamming weights $\wt(a_i)$.
In particular, if $a,b\in\soc(R)$ are such that $\wt(a_i)=\wt(b_i)$ for all $i\in[t]$, then $\omega(a)=\omega(b)$.
This shows that the homogeneous partition is closely related to the product of the Hamming partitions
on~$\soc(R_i),\,i\in[t]$.

We will therefore study this product partition first and come back to the homogeneous partition thereafter.
As we will show later, the homogeneous partition is reflexive if and only if it coincides on the socle with the product of the
Hamming partitions.

Denote by~$\cH_i$ the Hamming partition on~$\soc(R_i)=\soc(R_{i,1})\times\ldots\times\soc(R_{i,n_i})$, thus
$\cH_i=P_{i,0}\mmid\ldots\mmid P_{i,n_i}$ with  blocks
$P_{i,j}=\{(a_{i,1},\ldots,a_{i,n_i})\in \soc(R_i)\mid \wt(a_{i,1},\ldots,a_{i,n_i})=j\}$.
The induced product partition on $\soc(R_1)\times\ldots\times\soc(R_t)=\soc(R)$ is given by
\begin{equation}\label{e-cH}
   \cH:=\cH_1\times\ldots\times\cH_t,
\end{equation}
see~\eqref{e-prodPart}, and consists of the blocks
\begin{equation}\label{e-Pm}
    P_m:=P_{1,m_1}\times\ldots\times P_{t,m_t}=\{(a_1,\ldots,a_t)\in\soc(R)\mid \wt(a_i)=m_i\text{ for all }i\in[t]\},
\end{equation}
where $m:=(m_1,\ldots,m_t)\in\cM:=[n_1]_0\times\ldots\times[n_t]_0$.

For the dual partition the following identifications are useful.
Note first that $R/\rad(R)\cong R_1/\rad(R_1)\times\ldots\times R_t/\rad(R_t)$
for~$R$ as in~\eqref{e-Rform}.
In the same way
\begin{equation}\label{e-Rirad}
  R_i/\rad(R_i)\cong R_{i,1}/\rad(R_{i,1})\times\ldots\times R_{i,n_i}/\rad(R_{i,n_i}).
\end{equation}
This allows us to consider the Hamming weight on $R_i/\rad(R_i)$.
For $a_i=(a_{i,1},\ldots,a_{i,n_i})\in R_i$ put
\begin{equation}\label{e-Hammrad}
    \wt\big(a_i+\rad(R_i)\big):=|\{j\mid a_{i,j}\not\in \rad(R_{i,j})\}|. 
\end{equation}

Now we can formulate the following  duality.

\begin{theo}\label{T-HammRi}
Consider $\cH=(P_m)_{m\in\cM}$ as above. Define the partition~$\cH'$ of~$R$ as
\[
       \cH'=\big(P_m)_{m\in\cM\cup\{\diamond\}}, \text{ where }P_\diamond:=R\,\backslash\,\soc(R).
\]
The right dual partition~$\widehat{\cH'}^{\scriptscriptstyle[\chi,r]}$ does not depend on~$\chi$ and coincides with the left
dual partition $\widehat{\cH'}^{\scriptscriptstyle[\chi,l]}$.
It is given by
\[
   \widehat{\cH'}^{\scriptscriptstyle[\chi,r]}=\big(Q_m)_{m\in\cM\cup\{\diamond\}},
\]
where $Q_\diamond=\rad(R)\backslash\{0\},\ Q_{0}=\{0\}$ and for $m=(m_1,\ldots,m_t)\in\cM\backslash\{0\}$
\[
   Q_m=\{(a_1,\ldots,a_t)\in R\,\backslash\,\rad(R)\mid \wt(a_i+\rad(R_i))=m_i\text{ for all }i\in[t]\}.
\]
In particular,~$\cH'$ is reflexive.
If $R=\soc(R)$ then $|\cH'|=s:=\prod_{i=1}^t(n_i+1)$, and~$\cH'$ is simply the
product of the Hamming partitions and thus self-dual.
If $R\neq\soc(R)$ then $|\cH'|=s+1$.
\end{theo}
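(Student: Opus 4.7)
The plan is to obtain $\cH'$ by extending the partition $\cH$ of the subgroup $\soc(R)\leq R$ via Proposition~\ref{P-PartSubgroup}, compute the group-theoretic dual in $\widehat{R}$, and finally translate it to the ring-theoretic right $\chi$-dual through the isomorphism $\alpha_r^{-1}$ of~\eqref{e-RRnhat}.

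For the first step I would verify that $\cH$ is reflexive on $\soc(R)$. Each $\cH_i$ is the Hamming partition on the abelian group $\soc(R_i)=\soc(R_{i,1})\times\cdots\times\soc(R_{i,n_i})$ whose factors all have order $q_i$, so Example~\ref{E-HammPart} yields reflexivity and identifies the group-theoretic dual $\widehat{\cH_i}$ with the Hamming partition on $\widehat{\soc(R_i)}$. Theorem~\ref{T-IndProdPart} then makes $\cH=\cH_1\times\cdots\times\cH_t$ reflexive on $\soc(R)$ with dual $\widehat{\cH}=\widehat{\cH_1}\times\cdots\times\widehat{\cH_t}$. Setting $H=\soc(R)\leq G=R$ in Proposition~\ref{P-PartSubgroup} extends $\cH$ to $\cH'$ on $R$, grants reflexivity of $\cH'$, and expresses the group-theoretic dual $\widehat{\cH'}$ of $\widehat{R}$ through three types of blocks: $\{\veps_R\}$, $\soc(R)^{\perp}\setminus\{\veps_R\}$, and, for $m\in\cM\setminus\{0\}$, the set of characters $\chi'\in\widehat{R}\setminus\soc(R)^{\perp}$ whose restriction $\chi'_{|\soc(R)}$ sits in the $m$-th block of $\widehat{\cH}$.

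The critical step is to transport $\widehat{\cH'}$ to $R$ through $\alpha_r^{-1}$. Here the central identification is $\alpha_r^{-1}(\soc(R)^{\perp})=\rad(R)$: the condition $\alpha_r(v)\in\soc(R)^{\perp}$ reads $\chi(v\,\soc(R))=\{1\}$, i.e.\ the right ideal $v\,\soc(R)$ sits inside $\ker\chi$, which by the generating property (Remark~\ref{R-FrobProp}(b)) forces $v\,\soc(R)=0$; the double annihilator of Remark~\ref{R-FrobProp}(c) applied to the left ideal $\rad(R)$ then gives $\text{ann}_l(\soc(R))=\text{ann}_l(\text{ann}_r(\rad(R)))=\rad(R)$, and the claim follows. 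This directly yields $Q_\diamond=\rad(R)\setminus\{0\}$ and $Q_0=\{0\}$. For $v\in R\setminus\rad(R)$, writing $v=(v_{i,j})$ in the product decomposition and using the product generating character $\chi=(\chi_{i,j})$, the restriction factors as $\alpha_r(v)_{|\soc(R)}(w)=\prod_{i,j}\chi_{i,j}(v_{i,j}w_{i,j})$; applying the same local argument to each $R_{i,j}$ shows the $(i,j)$-factor is non-principal iff $v_{i,j}\notin\rad(R_{i,j})$, so the number of such indices within the $i$-th group equals $\wt(v_i+\rad(R_i))$, producing the stated form of $Q_m$.

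It remains to address $\chi$-independence, the equality $\widehat{\cH'}^{\scriptscriptstyle[\chi,l]}=\widehat{\cH'}^{\scriptscriptstyle[\chi,r]}$, the cardinality, and the self-duality claim. All four fall out quickly once the above is in place: every block of $\cH'$ is stabilized by both the left and right actions of $R^*=\prod_{i,j}R_{i,j}^*$, since componentwise units fix each of the sets $\rad(R_{i,j})$, $\soc(R_{i,j})\setminus\{0\}$ and their complements; hence the argument of Remark~\ref{R-UnitOrbits} applies on both sides, yielding $\chi$-independence and, by the symmetry of the block definitions, the coincidence of the two duals. The cardinality $|\cH'|=s$ (respectively $s+1$) is immediate since $|\cM|=\prod_i(n_i+1)=s$ and $P_\diamond$ is nonempty exactly when $R\neq\soc(R)$. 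In the semisimple case $R=\soc(R)$ we have $\rad(R)=\{0\}$, so $Q_\diamond=\emptyset$ and $\wt(a_i+\rad(R_i))=\wt(a_i)$, giving $Q_m=P_m$ and hence self-duality. I expect the third paragraph to be the main obstacle, since it is there that the Frobenius structure (generating character, double annihilator) and the product refinement interact to pin down both $Q_\diamond$ and the refined weight index simultaneously.
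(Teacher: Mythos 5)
Your proposal is correct and follows essentially the same route as the paper: reflexivity and the dual of $\cH$ on $\soc(R)$ via Example~\ref{E-HammPart} and Theorem~\ref{T-IndProdPart}, extension to~$R$ via Proposition~\ref{P-PartSubgroup}, and transport through $\alpha_r^{-1}$ using $\alpha_r^{-1}(\soc(R)^{\perp})=\text{ann}_l(\soc(R))=\rad(R)$. You merely spell out in more detail the annihilator identification and the componentwise non-principality criterion that the paper handles by citing~\eqref{e-Cdual} and writing the characters in product form.
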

\begin{proof}
We make use of Theorem~\ref{T-IndProdPart} and Proposition~\ref{P-PartSubgroup}.
For this we consider $\soc(R)$ as a subgroup of~$R$.
Then~$\cH$ is a partition of this subgroup, and it is given as the product of the Hamming partitions~$\cH_i$
of $\soc(R_i)$.
By Example~\ref{E-HammPart}, the dual partitions $\widehat{\cH_i}$ are the Hamming partitions on
the character groups $\widehat{\soc(R_i)}$, and Theorem~\ref{T-IndProdPart} implies that $\widehat{\cH}$ is the partition
$\widehat{\cH_1}\times\ldots\times\widehat{\cH_t}$ of $\widehat{\soc(R)}=\widehat{\soc(R_1)}\times\ldots\times\widehat{\soc(R_t)}$.
Proposition~\ref{P-PartSubgroup} yields that the partition~$\widehat{\cH'}$ of the group $\widehat{R}$
consists of the blocks
$\soc(R)^{\perp}\backslash\{\veps\},\ \{\veps\}$, and the blocks
\begin{equation}\label{e-Rhatright}
  \{(\chi_1\!\cdot\!a_1,\ldots,\chi_t\!\cdot\!a_t)\in\widehat{R}\,\backslash\,\soc(R)^{\perp}\mid
   \wt((\chi_i\!\cdot\!a_i)_{|\text{soc}(R_i)})=m_i\text{ for all }i\in[t]\}
\end{equation}
for all $m\in\cM\backslash\{0\}$, and where~$\chi_i$ is a fixed generating character of~$R_i$ (see also Remark~\ref{R-ProdFrob}).
Now the isomorphism $\alpha_r$ from~\eqref{e-RRnhat} turns the
partition~$\widehat{\cH'}$ of~$\widehat{R}$ into the partition $\widehat{\cH'}^{\scriptscriptstyle[\chi,r]}$ of~$R$.
Since $\alpha_r^{-1}(\soc(R)^{\perp})=\text{ann}_l(\soc(R))=\rad(R)$ due to~\eqref{e-Cdual}, it is clear that the partition
$\widehat{\cH'}^{\scriptscriptstyle[\chi,r]}$ consists of the blocks $\rad(R)\backslash\{0\},\;\{0\}$, and the sets~$Q_m$
given in the theorem.
All of this proves the desired duality.
Evidently, the left-sided version of~\eqref{e-Rhatright} is true as well, and the analogous proof establishes
$\widehat{\cH'}^{\scriptscriptstyle[\chi,r]}=\widehat{\cH'}^{\scriptscriptstyle[\chi,l]}$.
The cardinality of $\cH'$ is clear from $|\cM|=s$, and reflexivity follows from Theorem~\ref{T-ReflCrit}.
Finally, if $R=\soc(R)$, then $\rad(R)=\{0\}$, so that in this case the blocks~$P_\diamond$ and $Q_\diamond$ are missing,
and the sets~$Q_m$ are the blocks of the product partition~$\cH$ on $\soc(R)=R$.
Therefore~$\cH'$ is self-dual.
\end{proof}

From now on we will simply write $\widehat{\cH'}$ for the dual partition.

\begin{cor}\label{C-HH'Kraw}
The Krawtchouk coefficients of the pair $(\cH',\wcHr)$ fromTheorem~\ref{T-HammRi} are given by
\[
    K_{\ell,m}=\left\{\begin{array}{cl}
        -|\soc(R)|=-\prod_{i=1}^tq_i^{n_i},  &\text{if }m=\diamond=\ell\\
        |R\,\backslash\,\soc(R)|,&\text{if }m=\diamond,\,\ell=0,\\
        0,&\text{if } m=\diamond,\,\ell\notin\{0,\diamond\}\\
        |P_{1,m_1}\times\ldots\times P_{t,m_t}|=\prod_{i=1}^t\binom{n_i}{m_i}(q_i-1)^{m_i},&\text{if }m\neq\diamond=\ell\\
        \prod_{i=1} ^tK_{m_i}^{(n_i,q_i)}(\ell_i),&\text{if }m\neq\diamond\neq\ell,
      \end{array}\right.
\]
for all $\ell,m\in\cM\cup\{\diamond\}$, and
where $K_{m_i}^{(n_i,q_i)}(\ell_i)$ are the Krawtchouk coefficients of the Hamming partition of
$\soc(R_{i,1})\times\ldots\times\soc(R_{i,n_i})$.
In the special case where $R=\soc(R)$, and thus each $R_{i,j}$ is a field, only the last case occurs.
\end{cor}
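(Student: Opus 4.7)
The plan is to piece together three ingredients already at hand: Proposition~\ref{P-PartSubgroup} (for the trivial extension of a subgroup partition), Theorem~\ref{T-IndProdPart} (for products of reflexive partitions), and Example~\ref{E-HammPart} (for the Krawtchouk coefficients of the Hamming partition). Recall from the discussion after Definition~\ref{D-DualPart} that the numerical Krawtchouk coefficients of $(\cH',\wcHr)$ coincide with those of the pair $(\cH',\widehat{\cH'})$ taken in the group sense, so I may freely work on the group side.

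First I would apply Proposition~\ref{P-PartSubgroup} with $G = R$, $H = \soc(R)$, and subgroup partition $\cP = \cH$; the block $P_\diamond = R \setminus \soc(R)$ plays the role of $P_{-1}$ there. The Krawtchouk matrix~\eqref{e-Kraw1} immediately yields the three cases of the corollary indexed by $\diamond$: the $(\diamond, \diamond)$ entry is $-|H| = -|\soc(R)| = -\prod_{i=1}^t q_i^{n_i}$, the $(0, \diamond)$ entry is $|R \setminus \soc(R)|$, and $K_{\ell, \diamond} = 0$ for $\ell \notin \{0, \diamond\}$. The entries $K_{\diamond, m}$ for $m \in \cM$ read $|P_m|$, which by the product description~\eqref{e-Pm} factors as $\prod_i \binom{n_i}{m_i}(q_i - 1)^{m_i}$, settling the case $m \neq \diamond = \ell$.

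For the remaining entries, where both $\ell, m \in \cM$, the same matrix~\eqref{e-Kraw1} identifies them with the Krawtchouk coefficients of the pair $(\cH, \widehat{\cH})$ inside $\soc(R)$. Since $\cH = \cH_1 \times \ldots \times \cH_t$, Theorem~\ref{T-IndProdPart} factorizes them as $\prod_{i=1}^t K^{(i)}_{\ell_i, m_i}$, where $K^{(i)}$ denotes the Krawtchouk coefficients of $(\cH_i, \widehat{\cH_i})$. Because $\cH_i$ is the Hamming partition on $\soc(R_i)$, a product of $n_i$ abelian groups each of order $q_i$, Example~\ref{E-HammPart} identifies $K^{(i)}_{\ell_i, m_i}$ with the classical Krawtchouk polynomial value $K_{m_i}^{(n_i, q_i)}(\ell_i)$ from~\eqref{e-Krawtclassic}, which covers the case $m \neq \diamond \neq \ell$. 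In the special case $R = \soc(R)$, we have $\rad(R) = \{0\}$, so each local $R_{i,j}$ is semisimple and hence a field; then $P_\diamond$ is empty and only this last case of the formula survives.

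There is no substantive obstacle here; the argument is essentially combinatorial bookkeeping. The only point of mild care will be to align the $\diamond$-indexing of the corollary with the $(-1)$-indexing used in Proposition~\ref{P-PartSubgroup}, and to confirm that the correct alphabet size $|\soc(R_{i,j})| = q_i$ is plugged into the Krawtchouk polynomial at each stage.
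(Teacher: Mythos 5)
Your proposal is correct and follows exactly the route the paper takes: its proof is a one-line citation of Proposition~\ref{P-PartSubgroup}, Theorem~\ref{T-IndProdPart}, and Example~\ref{E-HammPart}, which is precisely the bookkeeping you carry out (including the correct identification of the $\diamond$-index with the $(-1)$-index of~\eqref{e-Kraw1}).
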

\begin{proof}
This is a consequence of Proposition~\ref{P-PartSubgroup} along with Theorem~\ref{T-IndProdPart} and the classical
Krawt\-chouk
coefficients for the Hamming partition given in Example~\ref{E-HammPart}.
\end{proof}

Now we can return to the homogeneous weight.
Note first that the equivalence relation corresponding to the partition~$\cH'$ is given by
\begin{equation}\label{e-H'equ}
   a\widesim_{\cH'}b\Longleftrightarrow \left\{\begin{array}{l}
               a,b\in R\,\backslash\,\soc(R) \text{ or} \\
               a,b\in\soc(R)\text{ and }\wt(a_i)=\wt(b_i)\text{ for all }i\in[t].\end{array}\right.
\end{equation}
A comparison to the homogeneous weight in Theorem~\ref{T-HomWtGen} shows that
$\cH'\leq\cPhom$.
In other words, $a\widesim_{\cH'}b\Longrightarrow \omega(a)=\omega(b)$ for all $a,b\in R$.
The converse of this implication is not true in general.
In other words, $\cPhom$ may be strictly coarser than~$\cH'$.
Theorem~\ref{T-HomWtGen} indicates that the particular values of~$q_1,\ldots,q_t$ decide on the difference between
these two partitions.
We cast the following definition.
It is simply made to reflect the case where $\cH'=\cPhom$, as we will show in Theorems~\ref{T-PhomAdmiss} 
and~\ref{T-PhomNonAdmiss}.

\begin{defi}\label{D-AdmPrimes}
Given the list $\cL:=[(q_1,n_1),\ldots,(q_t,n_t)]$ of distinct prime powers~$q_i$ and multiplicities $n_i\in\N$.
Then~$\cL$ is called \emph{separating} if the following holds:
\\
whenever $m:=(m_1,\ldots,m_t),\,\ell:=(\ell_1,\ldots,\ell_t)\in\cM=[n_1]_0\times\ldots\times[n_t]_0$ and $m\neq\ell$, then
\[
  \prod_{i=1}^t(q_i-1)^{m_i}\neq  \prod_{i=1}^t(q_i-1)^{\ell_i}\   \text{ or }\
  \sum_{i=1}^t m_i\not\equiv\sum_{i=1}^t\ell_i~\mod 2.
\]
We call a list $[q_1,\ldots,q_t]$ of distinct prime powers \emph{separating}, if $[(q_1,1),\ldots,(q_t,1)]$ is separating.
An integer~$N$ is \emph{separating} if its list of distinct prime factors is separating.
\end{defi}

As we will see below, for a separating list~$\cL$ the partition~$\cH'$ separates the
elements of~$R$ according to their homogeneous weight.

The list $[2,3,7,13]$ is not separating, and the condition is violated in two ways:
$(2-1)(3-1)(7-1)=(13-1)$ and $(3-1)(7-1)=(2-1)(13-1)$.
The list $[3,7,13]$ is separating.
The list $[(2,1),(3,2),(5,1)]$ is not separating because $(3-1)^2=(2-1)(5-1)$.
The list $[(3,2),(5,1)]$ is separating, but $[(3,4),(5,2)]$ is not.
The case $m=0$ shows that if $[(q_1,n_1),\ldots,(q_t,n_t)]$ is separating and $q_i=2$ for some $i$, then $n_i=1$.
Finally, if $[q_1,\ldots,q_t]$ is not separating, then $t\geq4$.

Starting with three distinct primes and using sufficiently large primes one easily shows that there
exist infinitely many separating integers.

\begin{theo}\label{T-PhomAdmiss}
Let $R$ be as in~\eqref{e-Rform}.
Suppose the list $\cL=[(q_1,n_1),\ldots,(q_t,n_t)]$ is separating.
Then $\cPhom=\cH'$, where~$\cH'$ is as in Theorem~\ref{T-HammRi}.
As a consequence, $\cPhom$ is reflexive and $\wcPhoml=\wcPhomr=\wcHr$.
Moreover, if $R=\soc(R)$, i.e., $R$ is semisimple,
then the homogeneous partition coincides with the product of the Hamming partitions~$\cH_i$ on each~$R_i$,
and thus is self-dual.
\end{theo}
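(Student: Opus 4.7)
The plan is to establish $\cPhom=\cH'$ by a two-sided containment; once this is done, reflexivity, the identification of the dual partitions, and the semisimple case all follow directly from Theorem~\ref{T-HammRi}. The inclusion $\cH'\leq\cPhom$ has already been observed right after equation~\eqref{e-H'equ}, as a direct consequence of Theorem~\ref{T-HomWtGen}: the homogeneous weight of $a=(a_1,\ldots,a_t)$ depends only on whether $a$ lies in $\soc(R)$ and, if so, only on the Hamming weights $\wt(a_i)$. So the real work is the reverse inclusion $\cPhom\leq\cH'$.

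For this I would argue that if $\omega(a)=\omega(b)$, then $a\widesim_{\cH'}b$. Split into two cases according to Theorem~\ref{T-HomWtGen}. If $\omega(a)=\omega(b)=1$, then necessarily $a,b\in R\setminus\soc(R)=P_\diamond$, so they lie in the same block of $\cH'$. Otherwise both elements lie in $\soc(R)$, and the equality $\omega(a)=\omega(b)$ translates, after writing $m_i:=\wt(a_i)$ and $\ell_i:=\wt(b_i)$, into
\[
   \prod_{i=1}^t\Big(\frac{-1}{q_i-1}\Big)^{m_i}=\prod_{i=1}^t\Big(\frac{-1}{q_i-1}\Big)^{\ell_i}.
\]
Separating this identity into its absolute value and its sign yields \emph{both}
\[
  \prod_{i=1}^t(q_i-1)^{m_i}=\prod_{i=1}^t(q_i-1)^{\ell_i}\quad\text{and}\quad\sum_{i=1}^t m_i\equiv\sum_{i=1}^t\ell_i\pmod 2.
\]
The separating hypothesis on $\cL$ is precisely the statement that these two equalities force $m=\ell$ in $\cM=[n_1]_0\times\cdots\times[n_t]_0$. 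Hence $\wt(a_i)=\wt(b_i)$ for every $i$, and by~\eqref{e-H'equ} we conclude $a\widesim_{\cH'}b$. This proves $\cPhom\leq\cH'$ and therefore $\cPhom=\cH'$.

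The remaining conclusions are immediate. Reflexivity of $\cPhom$ is inherited from reflexivity of $\cH'$ as given by Theorem~\ref{T-HammRi}, and the same theorem yields $\wcHr=\widehat{\cH'}^{\scriptscriptstyle[\chi,l]}$ independently of $\chi$; composing with $\cPhom=\cH'$ gives the triple equality $\wcPhoml=\wcPhomr=\wcHr$. Finally, if $R=\soc(R)$, then $R_{i,j}$ is a field for each $i,j$, the block $P_\diamond$ is empty, and Theorem~\ref{T-HammRi} identifies $\cH'$ with the product of the Hamming partitions on the $R_i$, which is self-dual; transferring via $\cPhom=\cH'$ delivers the semisimple assertion.

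The only delicate step is the case analysis leading to the separating identities, and the main (but mild) obstacle there is bookkeeping the sign: one must be careful that extracting the sign and the magnitude from a product of signed rationals really does yield independent information, which is exactly what the separating definition was tailored to exploit. The rest is purely formal application of the results already proved.
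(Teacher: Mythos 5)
Your proof is correct and follows essentially the same route as the paper: both reduce to showing $\cPhom\leq\cH'$ on the socle by splitting $\omega(a)=\omega(b)$ into the magnitude identity $\prod_i(q_i-1)^{\wt(a_i)}=\prod_i(q_i-1)^{\wt(b_i)}$ and the parity congruence, then invoking the separating hypothesis. The only cosmetic difference is that the paper first isolates $\{0\}$ as a block via the observation that separating forces $n_i=1$ whenever $q_i=2$, whereas your uniform treatment of all socle elements subsumes that step.
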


\begin{proof}
First, the separating property guarantees that if $q_i=2$ then $n_i=1$.
With Remark~\ref{R-HomWtValues}(b) we conclude that $\omega(a)\neq0$ for all $a\neq0$.
Thus $\{0\}$ is a block of~$\cPhom$.
Moreover, $R\,\backslash\,\soc(R)$ is a block of~$\cPhom$, see Remark~\ref{R-HomWtValues}(a).
Since both sets are also blocks of~$\cH'$ and $\cH'\leq\cPhom$,  it remains to show that for all $a,b\in\soc(R)$ such that
 $\omega(a)=\omega(b)$ we have $a\widesim_{\cH'}b$.
By Theorem~\ref{T-HomWtGen} $\omega(a)=\omega(b)$ yields
$(-1)^{\wt(b)}\prod_{i=1}^t(q_i-1)^{\wt(a_i)}=(-1)^{\wt(a)}\prod_{i=1}^t(q_i-1)^{\wt(b_i)}$, where
$\wt(a)=\sum_{i=1}^t\wt(a_i)$ and similarly for~$b$.
Since $q_i-1>0$, this leads to $\wt(a)\equiv\wt(b)~\mod 2$ and
$\prod_{i=1}^t(q_i-1)^{\wt(a_i)}=\prod_{i=1}^t(q_i-1)^{\wt(b_i)}$.
Since~$\cL$ is separating, this  yields $\wt(a_i)=\wt(b_i)$ for all $i\in[t]$, and therefore $a\widesim_{\cH'}b$.
This concludes the proof.
\end{proof}

Before turning to the non-separating case, let us present some examples of the homogeneous weight on integer residue rings.
We use the notation from Theorem~\ref{T-HammRi}.
The last result allows us to simply write $\wcPhom$ for the dual partitions.

\begin{exa}\label{E-homogWt2}
\begin{alphalist}
\item Let~$t=1,\,n_1=1$, and $q_1=q$ (so $[(q,1)]$ is separating), thus~$R$ is a local Frobenius ring with $|R/\rad(R)|=q$.
        Then $\cM=\{0,1\}$ and  the partitions $\cPhom=\cH'= P_{0}\mmidbig P_{1}\mmidbig P_\diamond$ and
        $\wcPhom=\widehat{\cH'}=Q_{0}\mmidbig Q_\diamond\mmidbig Q_{1}$ read as
        \[
          \cPhom= \{0\}\mmidbig \soc(R)\backslash\{0\}\mmidbig R\,\backslash\,\soc(R)
          \ \text{ and }
            \wcPhom=\{0\}\mmidbig \rad(R)\backslash\{0\} \mmidbig R^*.
        \]
        The Krawtchouk matrix, indexed row- and columnwise by the partition sets of~$\wcPhom$ and $\cPhom$ in the given
        order, is
      \[
           \begin{pmatrix} 1&q-1&|R|-q\\1&q-1&-q\\1&-1&0\end{pmatrix}.
      \]
      For $R=\Z_8$, this matrix also appears in~\cite[p.~1553]{Cam98} by Camion.
\item Let $R=\Z_{p^{r}}\times\Z_{q^{s}}$, where $p,\,q$ are distinct primes and
       $r,s\geq1$.
       The list $[p,q]$ is separating.
       Note that $R\cong\Z_N$, where $N=p^rq^s$, and hence~$N$ is separating.
       For the component rings, socle and radical satisfy $\soc(\Z_{p^{r}})\backslash\{0\}=\cO_{p^{r-1}}$ (the $(\Z_{p^{r}})^*$-orbit)
       and  $\rad(\Z_{p^{r}})=(p)$ and analogously for~$\Z_{q^{s}}$.
       The homogeneous partition $\cPhom$ and the values of the homogeneous weight are given by
       (in the order $P_{(0,0)}\mmid P_{(1,0)}\mmid P_{(0,1)}\mmid P_{(1,1)}\mmid P_\diamond$)
        \[
          \begin{array}{c||c|c|c|c|c}
             \cPhom&0&\cO_{p^{r-1}}\times\{0\}& \{0\}\times\cO_{q^{s-1}}
                 & \cO_{p^{r-1}}\times\cO_{q^{s-1}}
                 & R\,\backslash(p^{r-1})\times(q^{s-1})\phantom{\Big|} \\\hline
              \omega &0& \frac{p}{p-1}    &\frac{q}{q-1}       &\frac{pq-p-q}{(p-1)(q-1)} & 1\phantom{\Big|}
          \end{array}
      \]
      The dual partition is
      $\wcPhom=0\mmidbig (p)\times(q)\backslash\{0\}\mmidbig (p)\times \Z_{q^{s}}^*\mmidbig \Z_{p^{r}}^*\times(q)
         \mmidbig \Z_{p^{r}}^*\times\Z_{q^{s}}^*$, and the Krawtchouk matrix has the form
      \[
          K=\begin{pmatrix} 1&p-1&q-1&(p-1)(q-1)&p^rq^s-pq\\
                                1&p-1&q-1&(p-1)(q-1)&-pq\\
                                1&p-1&     -1 &1-p              & 0\\
                                1&  -1    &q-1&1-q              & 0\\
                                1&   -1   &    -1 &      1      & 0 \end{pmatrix}.
      \]
      If $N=pq$, the last block of~$\cPhom$ and the second block of~$\wcPhom$  are missing, and so are
      the last column and second row of~$K$.
      For this case, the values of the homogeneous weight also appears in~\cite[Ex.~3]{Byr11} by Byrne.
\end{alphalist}
\end{exa}

Here is the smallest non-separating integer~$N$.
As one may expect, the homogeneous partition is not reflexive.
\begin{exa}\label{E-homogWt3}
Consider~$\Z_N$, where~$N$ is the non-separating integer $N=2\cdot3\cdot7\cdot13=546$.
Note that $\Z_N\cong\Z_2\times\Z_3\times\Z_7\times\Z_{13}$ and since~$n_i=1$ for all~$i$, the partition $\cH'$
is simply the orbit partition under the multiplicative action of~$\Z_N^*$.
Hence~$|\cH'|=2^4=16$.
But the elements in the blocks $P_{(0,1,1,0)}$ and $P_{(1,0,0,1)}$ (for the notation see~\eqref{e-Pm}) have the same
homogeneous weight $11/12$, and similarly the elements in $P_{(0,0,0,1)}\cup P_{(1,1,1,0)}$ have weight $13/12$.
This leads to~$|\cPhom|=14$.
On the other hand, using a computer algebra program one computes that the dual partition~$\wcPhom$ is exactly the orbit partition~$\cH'$
and thus $|\wcPhom|=16$. Hence~$\cPhom$ is not reflexive.
\end{exa}

The situation of the last example is true for all non-separating cases.

\begin{theo}\label{T-PhomNonAdmiss}
Let~$R$ be as in~\eqref{e-Rform} and assume that $\cL=[(q_1,n_1),\ldots,(q_t,n_t)]$ is non-separating.
Then the partition $\cPhom$ is not reflexive.
More precisely, $\cPhom$ is strictly coarser than the partition~$\cH'$ from Theorem~\ref{T-HammRi}
whereas $\wcPhomr=\wcHr$.
Thus $|\wcPhomr|>|\cPhom|$.
\end{theo}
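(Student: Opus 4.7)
The proof splits into three steps, with the bulk of the effort going into the equality of dual partitions.

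\emph{Step 1} (strict refinement $\cH'\lneq\cPhom$): I will use the non-separating hypothesis to pick distinct $m,\ell\in\cM$ with $\prod_{i=1}^t(q_i-1)^{m_i}=\prod_{i=1}^t(q_i-1)^{\ell_i}$ and $\sum_i m_i\equiv\sum_i\ell_i\pmod 2$. Then Theorem~\ref{T-HomWtGen} gives $\omega(a)=\omega(b)$ for $a\in P_m,\, b\in P_\ell$, so the distinct $\cH'$-blocks $P_m,P_\ell$ lie in a common $\cPhom$-block. Combined with $\cH'\leq\cPhom$ (noted just before Definition~\ref{D-AdmPrimes}), this gives $|\cPhom|<|\cH'|$.

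\emph{Step 2} (duality equality $\wcPhomr=\wcHr$): The inclusion $\wcHr\leq\wcPhomr$ follows immediately from Remark~\ref{R-Dual0}(b). The converse amounts to showing that for every pair of distinct blocks $Q_\ell,Q_{\ell'}$ of $\wcHr$ (indexed by $\ell,\ell'\in\cM\cup\{\diamond\}$ as in Theorem~\ref{T-HammRi}) there is a $\cPhom$-block $P$ with $\sum_{w\in P}\chi(vw)\neq\sum_{w\in P}\chi(v'w)$ for $v\in Q_\ell, v'\in Q_{\ell'}$. The block $P_\diamond=R\setminus\soc(R)$ is a block of both $\cH'$ and $\cPhom$, and the Krawtchouk values in its column, supplied by Corollary~\ref{C-HH'Kraw}, attain three distinct values $|R|-|\soc(R)|$, $-|\soc(R)|$, $0$ on the three types $\ell=0$, $\ell=\diamond$, $\ell\in\cM\setminus\{0\}$; hence inter-type distinctions are automatic. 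What remains, and is the crux of the argument, is to distinguish distinct pairs $\ell,\ell'\in\cM\setminus\{0\}$ using the socle-contained $\cPhom$-blocks $P_S=\bigcup_{m\in S}P_m$, one for each $\cPhom$-equivalence class $S\subseteq\cM$.

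The main obstacle is thus showing $\sum_{m\in S}\prod_iK_{m_i}^{(n_i,q_i)}(\ell_i)\neq\sum_{m\in S}\prod_iK_{m_i}^{(n_i,q_i)}(\ell'_i)$ for some class $S$. My plan exploits the multiplicative formula $K_{\ell,m}=\prod_iK_{m_i}^{(n_i,q_i)}(\ell_i)$ from Corollary~\ref{C-HH'Kraw}. I will pick a coordinate $i_0$ with $\ell_{i_0}\neq\ell'_{i_0}$ together with $k\in[n_{i_0}]$ satisfying $K_k^{(n_{i_0},q_{i_0})}(\ell_{i_0})\neq K_k^{(n_{i_0},q_{i_0})}(\ell'_{i_0})$, which is possible by the invertibility of the Krawtchouk matrix of the classical Hamming partition on $\F_{q_{i_0}}^{n_{i_0}}$. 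Then I take $S$ to be the $\cPhom$-class of the vector $m^\star\in\cM$ with $m^\star_{i_0}=k$ and $m^\star_j=0$ for $j\neq i_0$. The defining constraints $\prod_j(q_j-1)^{m'_j}=(q_{i_0}-1)^k$ and $\sum_j m'_j\equiv k\pmod 2$ tightly restrict the other $m'\in S$, the main flexibility being on coordinates $j$ with $q_j=2$ (where $m'_j$ does not affect the product but still contributes to the parity). A case analysis controlling the contribution of each $m'\in S$ to the merged column sum (keying on whether $m'_{i_0}=k$ or not, and using $K_0^{(n_j,q_j)}(\cdot)=1$ to collapse factors at agreeing coordinates) will then yield the claimed inequality. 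Once Step~2 is complete, Theorem~\ref{T-ReflCrit} combined with the reflexivity of $\cH'$ immediately gives $|\wcPhomr|=|\wcHr|=|\cH'|>|\cPhom|$, so $\cPhom$ cannot be reflexive.
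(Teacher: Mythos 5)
Your Step~1 and the framing of Step~2 (the reduction to distinguishing $\ell,\ell'\in\cM\setminus\{0\}$ after peeling off $\ell=0$ and $\ell=\diamond$ via the $P_\diamond$ column of the Krawtchouk matrix) are correct and match the paper. The final deduction $|\wcPhomr|=|\wcHr|=|\cH'|>|\cPhom|$ is also fine. But the crux of Step~2 --- showing that the merged column sums still separate distinct $\ell,\ell'\in\cM\setminus\{0\}$ --- is only announced, not carried out, and the strategy as stated has a genuine gap. You pick an \emph{arbitrary} coordinate $i_0$ with $\ell_{i_0}\neq\ell'_{i_0}$ and some $k$ with $K_k^{(n_{i_0},q_{i_0})}(\ell_{i_0})\neq K_k^{(n_{i_0},q_{i_0})}(\ell'_{i_0})$, and then consider the $\cPhom$-class $S$ of $m^\star=k e_{i_0}$. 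The problem is that the other members $m'\in S$ need not be supported only on coordinates where $q_j=2$: the relevant relations are multiplicative identities among the $q_j-1$, e.g.\ for $[2,3,7,13]$ one has $(3-1)(7-1)=(13-1)$, so the class of $e_4$ contains $(1,1,1,0)$. The contribution $K_{\ell,m'}=\prod_j K_{m'_j}^{(n_j,q_j)}(\ell_j)$ of such an $m'$ depends on coordinates of $\ell$ and $\ell'$ other than $i_0$, which you have not shown to agree; hence you cannot isolate the $m^\star$ term in the sum over $S$, and the ``case analysis'' you defer to is exactly where the difficulty lives. (Your remark that the flexibility sits mainly at coordinates with $q_j=2$ is not correct for the same reason.)

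The paper closes this gap with an ordering-plus-induction argument that your sketch lacks: sort $q_1<\cdots<q_t$, use only $m=e_i$ (where $K_{\ell,e_i}=(n_i-\ell_i)q_i-n_i$ is injective in $\ell_i$), and observe that any $m$ of odd weight with $\prod_j(q_j-1)^{m_j}=q_i-1$ must satisfy $m_j=0$ for $j\geq i$. Consequently every merged partner of $P_{e_i}$ involves only coordinates $a<i$, for which $\ell_a=\ell'_a$ has already been established by induction; their contributions therefore cancel from both sides of the merged-sum equality, forcing $K_{\ell,e_i}=K_{\ell',e_i}$ and hence $\ell_i=\ell'_i$. To repair your proposal you would need to replace the arbitrary choice of $i_0$ and $k$ by this (or an equivalent) inductive scheme that pins down the coordinates in increasing order of $q_i$.
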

\begin{proof}
As in Theorem~\ref{T-HammRi} we denote the blocks of~$\cH'$ by $P_m,\,m\in\cM\cup\{\diamond\}$.
Let $K_{\ell,m}$ be the Krawtchouk coefficients of the pair $(\cH',\wcHr)$.
It is clear from~\eqref{e-H'equ} that $\cPhom\geq\cH'$ and thus $\wcPhomr\geq\wcHr$ due to Remark~\ref{R-Dual0}(b).
Furthermore, $P_{\diamond}=R\,\backslash\,\soc(R)$ is a block of~$\cPhom$.
Theorem~\ref{T-HomWtGen} shows that for $m\in\cM$
\begin{equation}\label{e-PmWeight}
   a\in P_m\Longrightarrow \omega(a)=1-\prod_{i=1}^t\big(\frac{-1}{q_i-1}\big)^{m_i}.
\end{equation}
From the fact that~$\cL$ is non-separating it follows that there exist two distinct partition sets $P_m,\,P_{m'}$ for some
$m,\,m'\in\cM$ such that all elements of $P_m\cup P_{m'}$ have the same homogeneous weight.
Thus $\cPhom>\cH'$.

We have to show that $\wcPhomr=\wcHr$.
In order to do so, we need some preparation.
Evidently, any partition set~$P$ of $\cPhom$, where $P\neq P_{\diamond}$, is of the form $P=\cup_{m\in\cL}P_m$ for a subset $\cL\subseteq\cM$.
Then
\begin{equation}\label{e-SumKlm}
       \sum_{b\in P}\chi(ab)= \sum_{m\in\cL}\sum_{b\in P_m}\chi(ab)=\sum_{m\in\cL}K_{\ell,m}\text{ for any }a\in Q_\ell,
\end{equation}
where $\wcHr=(Q_{\ell})_{\ell\in\cM\cup\{\diamond\}}$; see~\eqref{e-klm2r}.

Let $\ell,\,\ell'\in\cM\cup\{\diamond\}$ be such that $Q_\ell$ and $Q_{\ell'}$ are contained in the same partition set of
$\wcPhomr$.
We have to show that $\ell=\ell'$.
Let us first convince ourselves that we may assume that~$\ell,\ell'$ are in $\cM\backslash\{0\}$ for otherwise we are done.
To do so, note that  $Q_{0}=\{0\}$ is a partition set of~$\wcPhomr$ as this is a general property of dual partitions, see
Remark~\ref{R-Dual0}(a).
Thus $\ell\neq0\neq\ell'$.
Next, recall that the index~$\diamond$ occurs only if $R\neq\soc(R)$ and that in this case
$P_{\diamond}$ is a block of~$\cPhom$.
Corollary~\ref{C-HH'Kraw} shows that $K_{\ell,\diamond}=0$ for all $\ell\not\in\{0,\diamond\}$ whereas
$K_{\diamond,\diamond}<0$ and $K_{0,\diamond}>0$.
This implies that $Q_\diamond$ must also be a block of~$\wcPhomr$.
All of this shows that we may assume $\ell,\ell'\in\cM\backslash\{0\}$.

With the aid of~\eqref{e-SumKlm} our assumption on~$\ell,\,\ell'$ may be written as
\begin{equation}\label{e-KJIsum}
     \sum_{m\in\cL}K_{\ell,m} =\sum_{m\in\cL}K_{\ell',m} \text{ for all partition sets }\bigcup_{m\in\cL}P_m\text{ of }\cPhom.
\end{equation}

We will make use of the Krawtchouk coefficients for the case where $m=e_i=(0,\ldots,1,\ldots,0)$ (with~$1$ in the $i$th position).
From Corollary~\ref{C-HH'Kraw} we have
$K_{\ell,e_i}=K_1^{(n_i,q_i)}(\ell_i)=(n_i-\ell_i)q_i-n_i$, which along with~\eqref{e-KJIsum} results in the implication
\begin{equation}\label{e-Impl}
   \text{$P_{e_i}$ is a block of $\cPhom$}\Longrightarrow \ell_i=\ell'_i.
\end{equation}
Hence we aim at showing that the sets $P_{e_i}$ are blocks of~$\cPhom$.
In order to do so, we assume without loss of generality that
\begin{equation}\label{e-sortq}
      q_1<\ldots<q_t.
\end{equation}
\underline{Case~1:}
Suppose that for all $i\in\{1,\ldots,t\}$ we have $q_i-1\neq\prod_{j=1}^t(q_j-1)^{m_j}$ whenever $\sum_{j=1}^t m_j$ is odd.
Then~\eqref{e-PmWeight} shows that~$P_{e_i}$ is a block of~$\cPhom$, and hence~\eqref{e-Impl} implies $\ell=\ell'$.
\\[.6ex]
\underline{Case 2:}
Let $i\in\{1,\ldots,t\}$ be minimal such that $q_i-1=\prod_{j=1}^t(q_j-1)^{m_j}$ for some $m\neq e_i$ and $\sum_{j=1}^t m_j$ odd.
Then $m=(m_1,\ldots,m_{i-1},0,\ldots,0)$.
Let $m^{(1)},\ldots,m^{(s)}$ be all such indices satisfying
\[
     q_i-1=\prod_{j=1}^t(q_j-1)^{m^{(r)}_j}\text{ for }r=1,\ldots,s.
\]
Then the same argument as in Case~1 along with~\eqref{e-sortq} shows that the sets $P_{e_a}$, where $a<i$, are
blocks of $\cPhom$.
Hence by~\eqref{e-Impl}
\begin{equation}\label{e-ellell'}
   \ell_a=\ell'_a\text{ for }a=1,\ldots,i-1.
\end{equation}
Next, the set $P':=\bigcup_{m\in\cL}P_m$, where $\cL=\{e_i,m^{(1)},\ldots,m^{(s)}\}$, is a block of $\cPhom$.
Since $m^{(r)}_j=0$ for $j\geq i$, the case $m\neq\diamond\neq\ell$ in Corollary~\ref{C-HH'Kraw} along with~\eqref{e-Krawtclassic}
shows that
\[
  K_{\ell,m^{(r)}}=K_{(\ell_1,\ldots,\ell_{i-1},0,\ldots,0),m^{(r)}}\text{ for all }r=1,\ldots,s
\]
and analogously for~$\ell'$.
Hence $K_{\ell,m^{(r)}}=K_{\ell',m^{(r)}}$ by~\eqref{e-ellell'} and thus
\[
  \sum_{m\in\cL}K_{\ell,m}=\sum_{r=1}^s K_{\ell,m^{(r)}}+K_{\ell,e_i}=\sum_{r=1}^s K_{\ell',m^{(r)}}+K_{\ell,e_i}.
\]
Thus~\eqref{e-KJIsum} yields $K_{\ell,e_i}=K_{\ell',e_i}$ and hence $\ell_i=\ell'_i$.

Now we may continue in the same fashion for the index set $\{i+1,\ldots,t\}$.
If there is no index $i'>i$ such that $q_{i'}-1=\prod_{j=1}^t(q_j-1)^{m_j}$ and $\sum_{j=1}^t m_j$ is odd, then we
may argue as in Case~1.
Otherwise, we choose the smallest $i'>i$ and argue as in Case~2.
Proceeding in this manner we finally arrive at $\ell=\ell'$, as desired.
\end{proof}

We close the paper with the following summary.

\begin{cor}\label{C-ZNPhom}
Let~$R$ be as in~\eqref{e-Rform}.
\begin{alphalist}
\item $\cPhom$ is reflexive if and only if it coincides on the socle with the product of the Hamming partitions on~$R_i,\,i=1,\ldots,t$, and this
      is the case if and only if $[(q_1,n_1),\ldots,(q_t,n_t)]$ is separating.
      In this case $\wcPhomr=\wcPhoml$.
      Moreover, $\cPhom$ is self-dual if and only if~$R$ is semisimple and $[(q_1,n_1),\ldots,(q_t,n_t)]$ is separating.
\item The homogeneous partition~$\cPhom$ on~$\Z_N$ is reflexive if and only if~$N$ is separating.
      The partition is self-dual if and only if $N$ is square-free and separating.
\end{alphalist}
\end{cor}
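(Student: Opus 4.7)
The plan is to derive Corollary~\ref{C-ZNPhom} as a direct synthesis of Theorems~\ref{T-PhomAdmiss}, \ref{T-PhomNonAdmiss}, and~\ref{T-HammRi}, with one small size-counting argument needed for the self-duality assertion in~(a).

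For part~(a), I would first note that by Remark~\ref{R-HomWtValues}(a) the set $R\setminus\soc(R)$ is always a single block of~$\cPhom$, and by construction the restriction of~$\cH'$ to~$\soc(R)$ is exactly the product partition $\cH=\cH_1\times\cdots\times\cH_t$. Combined with $\cH'\leq\cPhom$ (which follows from~\eqref{e-H'equ} together with Theorem~\ref{T-HomWtGen}), this gives that $\cPhom$ coincides on the socle with the product of the Hamming partitions if and only if $\cPhom=\cH'$ as partitions of~$R$. Theorem~\ref{T-PhomAdmiss} then yields $\cPhom=\cH'$ (hence reflexivity, together with $\wcPhoml=\wcPhomr$) whenever the list is separating, while Theorem~\ref{T-PhomNonAdmiss} shows that in the non-separating case $\cPhom$ is strictly coarser than~$\cH'$ and fails to be reflexive. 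This closes the three-way equivalence.

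For the self-duality claim in~(a), the ``if'' direction is immediate: if~$R$ is semisimple and the list is separating, then by Theorem~\ref{T-PhomAdmiss} $\cPhom$ equals the product of the Hamming partitions on the~$R_i$, which is self-dual by the last sentence of Theorem~\ref{T-HammRi}. For the converse, self-duality forces reflexivity, so the list must be separating and $\cPhom=\cH'=\wcHr$. I would rule out the non-semisimple case by a size count: if $R\neq\soc(R)$, then $\wcHr$ contains the block $\rad(R)\setminus\{0\}$ while $\cH'$ contains the block $R\setminus\soc(R)$, and self-duality would force these to coincide. Using the Frobenius equicardinality $|\soc(R)|=|R/\rad(R)|$, so that $|R|=|\soc(R)|\cdot|\rad(R)|$, one obtains $|R\setminus\soc(R)|=|\soc(R)|\cdot(|\rad(R)|-1)$, which can equal $|\rad(R)|-1$ only if $|\rad(R)|=1$; hence~$R$ must be semisimple.

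For part~(b), factor $N=p_1^{e_1}\cdots p_t^{e_t}$ into distinct primes~$p_i$. Then $\Z_N\cong\Z_{p_1^{e_1}}\times\cdots\times\Z_{p_t^{e_t}}$, and each $\Z_{p_i^{e_i}}$ is a local Frobenius ring with residue field of order $q_i=p_i$, so in the notation of~\eqref{e-Rform} we have $n_i=1$ for all~$i$. By Definition~\ref{D-AdmPrimes}, $[(p_1,1),\ldots,(p_t,1)]$ is separating iff $[p_1,\ldots,p_t]$ is separating, which by definition is the separating property of~$N$. Part~(a) then yields the reflexivity characterization. For self-duality, part~(a) additionally requires $R$ to be semisimple, i.e., $\rad(\Z_{p_i^{e_i}})=\{0\}$ for every~$i$; this happens precisely when $e_i=1$ for all $i$, that is, when~$N$ is square-free. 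The only step that goes beyond a direct citation is the size argument above, recognizing that the blocks $R\setminus\soc(R)$ and $\rad(R)\setminus\{0\}$ cannot be equal once $\rad(R)\neq\{0\}$, by virtue of the Frobenius equicardinality $|R/\rad(R)|=|\soc(R)|$.
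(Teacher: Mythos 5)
Your handling of the reflexivity equivalences in (a) and of part (b) is correct, and it is exactly the synthesis of Theorems~\ref{T-PhomAdmiss}, \ref{T-PhomNonAdmiss} and~\ref{T-HammRi} that the paper intends (the paper offers no proof at all, presenting the corollary as a closing summary). The ``if'' direction of the self-duality claim is also fine.

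The gap is in your ``only if'' direction for self-duality. You assert that self-duality ``would force'' the block $R\setminus\soc(R)$ of $\cPhom=\cH'$ to coincide with the block $\rad(R)\setminus\{0\}$ of $\wcPhomr=\wcHr$. But identity of partitions, as defined in the paper, only requires the blocks to coincide \emph{after suitable reindexing}; it forces $R\setminus\soc(R)$ to equal \emph{some} block of $\wcHr$, and that block may be one of the $Q_m$ rather than $Q_\diamond=\rad(R)\setminus\{0\}$. This actually happens: take $R=\Z_4$, so $t=1$, $n_1=1$, $q_1=2$, the list $[(2,1)]$ is separating, and $\soc(R)=\rad(R)=(2)$. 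By Example~\ref{E-homogWt2}(a), $\cPhom=\{0\}\mmid\{2\}\mmid\{1,3\}$ and $\wcPhomr=\{0\}\mmid\rad(R)\setminus\{0\}\mmid R^*=\{0\}\mmid\{2\}\mmid\{1,3\}$, which are identical partitions although $R$ is not semisimple; here $R\setminus\soc(R)=\{1,3\}$ is matched with $Q_1$ and $\rad(R)\setminus\{0\}=\{2\}$ with $P_1$, so your forced identification fails and the subsequent size count never gets started. Worse, this example (and likewise $\Z_{p^2}$, or any local Frobenius ring with $\soc(R)=\rad(R)\neq\{0\}$) shows that the ``only if'' direction of the statement itself, and hence the square-free condition in part (b), cannot be established under the paper's notion of identical partitions: a correct treatment must either strengthen ``self-dual'' to require the blocks to correspond under the natural indexing (so that $P_\diamond$ is matched with $Q_\diamond$, which is what your size count implicitly assumes), or explicitly exclude the rings for which $\rad(R)\setminus\{0\}$ is a single block $P_m$ of $\cH$ and $R\setminus\soc(R)$ a single $Q_m$ of $\wcHr$.
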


We leave it to future research to determine the values and the corresponding partition of the homogeneous weight for general Frobenius rings.

\bibliographystyle{abbrv}
\bibliography{literatureAK,literatureLZ}

\end{document}